\begin{document}

\newcommand{\Z}{\mathbb{Z}}
\newcommand{\M}{\mathcal{M}}
\newcommand{\B}{\mathcal{B}}
\newcommand{\Sym}{\mathcal{S}}
\newcommand{\A}{\mathcal{A}}
\newcommand{\SUM}{\textrm{SUM}}
\newcommand{\CNOT}{\textrm{CNOT}}
\newcommand{\Flip}{\textrm{Flip}}
\newcommand{\SU}{SU}
\newcommand{\SO}{SO}
\newcommand{\C}{\mathbb{C}}
\newcommand{\Cat}{\mathcal{C}}

\theoremstyle{definition}
\newtheorem{axiom}{Axiom}
\newtheorem{thm}{Theorem}
\newtheorem{Conjecture}{Conjecture}
\newtheorem{lem}{Lemma}
\newtheorem{example}{Example}
\newtheorem{cor}{Corollary}
\newtheorem{prop}{Proposition}
\newtheorem{rem}{Remark}
\newtheorem{definition}{Definition}
\newtheorem{measurement}{Measurement}
\newtheorem{ancilla}{Ancilla}

\numberwithin{equation}{section} \makeatletter
\renewenvironment{proof}[1][\proofname]{\par
    \pushQED{\qed}%
    \normalfont \topsep6\p@\@plus6\p@ \labelsep1em\relax
    \trivlist
    \item[\hskip\labelsep\indent
        \bfseries #1]\ignorespaces
}{%
    \popQED\endtrivlist\@endpefalse
} \makeatother
\renewcommand{\proofname}{Proof}

\title{Universal Quantum Computation with Metaplectic Anyons}
\author{Shawn X. Cui$^{1}$ and Zhenghan Wang$^{1,2}$}

\address{$^1$Department of Mathematics\\University of California\\Santa Barbara, CA 93106}
\email{xingshan@math.ucsb.edu, zhenghwa@math.ucsb.edu}
\address{$^2$Microsoft Research, Station Q\\ University of California\\ Santa Barbara, CA 93106}
\email{zhenghwa@microsoft.com}

\thanks{The authors are partially supported by NSF DMS 1108736.}
\date{Revised Feb. 26th 2015}

\keywords{anyonic quantum computation, metaplectic anyons, universal gate set, braid group}

\maketitle

\begin{abstract}
We show that braidings of the metaplectic anyons $X_\epsilon$ in $\SO(3)_2=\SU(2)_4$ with their total charge equal to the metaplectic mode $Y$ supplemented with projective measurements of the total charge of two metaplectic anyons are universal for quantum computation.  We conjecture that similar universal anyonic computing models can be constructed for all metaplectic anyon systems $\SO(p)_2$ for any odd prime $p\geq 5$.  In order to prove universality, we find new conceptually appealing universal gate sets for qutrits and qupits.
\end{abstract}

\section{Introduction}

Anyons are modeled mathematically by simple objects of unitary modular categories (UMCs).  An important invariant of an anyon type $x$ is its quantum dimension $d_x$---the ground state degeneracy $V_{n,x}$ of $n$ type $x$ anyons in the disk $D^2$ (with an appropriate total charge) is asymptotically $d_x^n$.  When $d_x=1$, an anyon of type $x$ is abelian.  Otherwise, $d_x>1$ and such anyons are non-abelian.  A non-abelian anyon of type $x$ naturally leads to a representation $\rho_{n,x}$ of the $n$-stand braid group $B_n$: $B_n\rightarrow U(V_{n,x})$ for each $n\geq 1$.  The property $F$ conjecture is that the images $\rho_{n,x}(B_n)$ in $U(V_{n,x})$ are all finite subgroups if and only if $d_x^2 \in \Z$ \cite{naidu11}.  When $d_x^2 \in \Z$, then anyons of type $x$ are called {\it weakly integral}.  Interesting weakly integral anyons include those in metaplectic UMCs \cite{hastings2013metaplectic}, which are known to have Property $F$ \cite{rowell14}.

Anyons can be used for quantum information processing.  Ideally, we would like to have a non-abelian anyon such as the Fibonacci anyon whose braidings alone are universal for quantum computation \cite{FLW02}.  But more realistic anyons seem to be weakly integral.  If the Property $F$ conjecture holds, then weakly integral anyons cannot be universal for quantum computation by braidings alone.  Therefore, it is interesting to investigate what extra resources are required for universal quantum computation.  In \cite{cui2014universal}, we analyze the simplest integral non-abelian UMC $D(S_3)$.  In this paper, we focus on anyons in the metaplectic UMCs.  We separate weakly integral anyons into two classes \cite{footnote1}: $P$-anyons and $\#P$-anyons.  $P$-anyons are those whose associated link invariants can be computed classically in polynomial time, while the associated link invariants of $\#P$-anyons are $\#P$ hard to compute.  In particular all abelian anyons are $P$-anyons.  Abelian anyons are only good for topological quantum memory because the resulting braid group representations only lead to phases\cite{footnote2}.  While the Ising anyon leads to many topologically protected quantum gates, all can be simulated classically efficiently because they are Clifford gates \cite{nielsen2010quantum}.  Moreover, we believe the projective measurements of the total charge of any number of Ising anyons can also be simulated classically efficiently.  The Ising anyon and the metaplectic anyon $X_\epsilon$ of quantum dimension $=\sqrt{p}$ in $\SO(p)_2$ are all $P$-anyons.  Surprisingly, the metaplectic modes $Y_i$ of quantum dimension $=2$ in the metaplectic UMCs are $\#P$-anyons \cite{hastings2013metaplectic}.  This $\#P$-hardness makes us believe that if the metaplecitc modes $Y_i$'s are used in the computation, we might gain extra computational power.  Indeed, we will show that braidings of the metaplectic anyons $X_\epsilon$ in $\SO(3)_2=\SU(2)_4$ with their total charge equal to the metaplectic mode $Y$ supplemented with projective measurements of the total charge of two metaplectic anyons are universal for quantum computation.  We conjecture that similar universal computing models can be constructed for all metaplectic anyon systems $\SO(p)_2$ for any odd prime $p\geq 5$.  In order to prove universality, we find new conceptually appealing universal gate sets for qutrits and qupits.

Our interest for $\SU(2)_4$ comes from its potential physical relevance.  There are many possible routes to realize our universal quantum computational model: fractional quantum Hall liquids at $\nu=8/3$ \cite{RR99}, bilayer fractional quantum Hall liquids at $\nu=2/3$  \cite{BW10}, metaplectic anyons \cite{HNW13}, and parafermion zero modes \cite{CAS}.  Evidence for the realization of $SU(2)_4$ in fractional quantum Hall liquids at $\nu=8/3$ is found numerically \cite{peterson15}.  The experimental challenge is to find a realization of the metaplectic mode $Y$ of dimension $=2$.

The paper is organized as follows. In Section \ref{univeral gate set}, we give two new universal gate sets, one for qutrit and one for qupit.  In Section \ref{background}, we present a universal anyonic model with $\SU(2)_4$ and use theorems in Section \ref{univeral gate set} to prove its universality. We also propose a similar model with $\SO(5)_2$ and provide some partial results. Appendices \ref{6jSO(3)},\ref{6jSO(5)} contain the $6j$-symbols and $R$-symbols for $\SO(3)_2$ and $\SO(5)_2,$ which are the data we need to compute the braid group representations and construct braiding quantum gates. Appendix \ref{braid matrix} shows how to compute the braid matrices for $1$-qudit models.

\section{Universal gate sets for qutrits and qupits} \label{univeral gate set}

Throughout this paper, $d\geq 2$ is an integer and $\omega_d = e^{\frac{2\pi i}{d}}$ is the $d$-th root of unity.  We will set $\omega=\omega_3$, and use $p$ to denote an odd prime $p\geq 5$.

Let $\C^d$ be the qudit with the standard basis $\{|j\rangle|j=0,1,...,d-1\}$.  For $p$ an odd prime $p\geq 5$, we will refer to a qudit as a {\it qupit}.  It is not hard to believe that qubits and qutrits behave differently from qupits.  Our universal gate sets below show some differences already.

A standard universal gate set for the qubit quantum circuit model consists of the Hadamard gate $H$, the controlled-NOT gate $\CNOT$, and the $\frac{\pi}{8}$-gate $T$ \cite{boykin1999universal} \cite{nielsen2010quantum}.  There are natural generalizations of the Hadamard and $\CNOT$ gates to qudits.  The $T$ gate is a $4$-th root of the Pauli $\sigma_z$ matrix.  If we propose generalizations of the Pauli $\sigma_z$ to qudits, how many roots do we need to take for obtaining a universal gate set?  For our generalizations of the Pauli matrix, the answer is simply $2$ for qutrits and none for qupits.

The generalized Hadamard gate for qudits is the generalized Hadamard gate $H_d$:

$$H_d \;|j\rangle = \frac{1}{\sqrt{d}}\sum\limits_{i=0}^{d-1} \omega_d^{ij}|i\rangle, \, j = 0,1,\cdots, d-1.$$

A natural generalization of the $\CNOT$ gate is the following $\SUM$ gate:

$$ \SUM_d \; |i,j\rangle = |i,i+j (\textrm{mod} \, d)\rangle, \, i,j = 0,1,\cdots, d-1.$$

The $T$-gate is the $4$-th root of the Pauli $\sigma_z$ matrix.  The  $\sigma_z$ gate can be generalized to the $Q[i]$ gates for qudits:

$$Q[i]_d \; |j\rangle = \omega_d^{\delta_{ij}}|j\rangle, \, i,j = 0,1,\cdots, d-1.$$

Related to the $Q[i]$ gates are the $P[i]$ gates:

$$P[i]_d \; |j\rangle = (-\omega_d^2)^{\delta_{ij}}|j\rangle, \, i,j = 0,1,\cdots, d-1.$$

Some other gates that will be used throughout this paper are:

The generalized $X$ gate, $X_d \; |i\rangle = |i+1 (\textrm{mod} \, d)\rangle,$

The generalized $Z$ gate, $Z_d \; |i\rangle = \omega_d^i |i\rangle,$

The generalized controlled-$Z$ gate, $\bigwedge(Z)_d \; |i,j\rangle = \omega_d^{ij}|i,j\rangle,$

Sign-flip gate, $\Flip[i]_d \; |j\rangle = (-1)^{\delta_{ij}} |j\rangle, \, i = 0,1,\cdots, d-1.$

When $d=3$, the $P[i]$ gate is a square root of the $Q[i]$ gate.  In general when $p$ is an odd prime, $Q[i]_p$ is always a power of $P[i]_p.$
When no confusion arises, we will drop the subscripts $d$ or $p$ from the notation.

We will prove below that for $d=3$ the gate set consisting of the generalized Hadamard gate $H_3$, the $\SUM$ gate $\SUM_3$, and any one of the $P[i]_3, i=0,1,2$ gates is universal for the qutrit quantum circuit model, while for qupits the generalized Hadamard gate $H_p$, the $\SUM$ gate $\SUM_p$, and the $Q[i]_p, i=1,2,..., p-1$ gates suffice.  Our universal qutrit gate set is new.  The universal qupit gate set is distilled from the universal gate sets in \cite{aharonov9906129fault}, though our universal gate set is not explicitly given there and our proof of universality is new.

\subsection{Universal qutrit gate sets} \label{universal qutrit}

\begin{thm} \label{qutrit thm}
The following gate set is universal for the qutrit quantum circuit model:

1). The generalized Hadamard gate $H_3$

2). The $\SUM$ gate $\SUM_3$

3). Any gate from the set $\{P[0]_3, \, P[1]_3, \,P[2]_3\}$.
\end{thm}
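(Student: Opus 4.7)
The plan is to follow the standard Clifford-plus-non-Clifford strategy adapted to qutrits: (i) show that $H_3$, $\SUM_3$, and any one $P[i]_3$ together generate at least the qutrit Clifford group $\mathcal{C}_n^{(3)}$; (ii) show that $P[i]_3 \notin \mathcal{C}_n^{(3)}$, so the group generated properly contains the Clifford group and must therefore be dense in $PU(3^n)$.

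For (i), the key observation is $P[i]_3^2 = Q[i]_3$, since $(-\omega^2)^2 = \omega^4 = \omega$. Consequently the gate set contains $\{H_3, \SUM_3, Q[i]_3\}$ for some fixed $i$. A direct computation shows that $Q[i]_3$ is in the Clifford group and acts non-trivially on the qutrit Pauli group by conjugation; for example, $Q[0]_3 X_3 Q[0]_3^{-1} = \omega^{-1} X_3 Z_3$. It is then a standard fact in the qudit stabilizer formalism for odd prime $d$ that $H_d$, $\SUM_d$, together with any non-trivial Clifford diagonal phase gate, generate the full Clifford group $\mathcal{C}_n^{(d)}$. Applied here, $\{H_3, \SUM_3, Q[i]_3\}$ generates the qutrit Clifford group.

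For (ii), non-Cliffordness of $P[i]_3$ is verified by a direct conjugation check. For instance, $P[0]_3 X_3 P[0]_3^{-1}$ has matrix entries $-\omega$, $1$, $-\omega^2$ at the positions $(1,0)$, $(2,1)$, $(0,2)$ respectively, and the ratios of these entries cannot simultaneously be made $\omega$-powers of a single $\omega^{c}$ --- the consistency condition required for membership in the (projective) Pauli group. Hence $P[i]_3$ is not in the projective Clifford group. Density of $\langle \mathcal{C}_n^{(3)}, P[i]_3\rangle$ in $PU(3^n)$ then follows from two ingredients: \textbf{(a)} the Clifford group is a maximal finite subgroup of $PU(d^n)$ for odd prime $d$ (a Nebe--Rains--Sloane-type statement), which forces the generated group to be infinite; and \textbf{(b)} its closure is a closed Lie subgroup of $PU(3^n)$ containing the irreducibly-acting Clifford group, which by Schur's lemma together with a Lie-theoretic argument must be all of $PU(3^n)$.

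The main obstacle is the density step in (ii). Excluding intermediate closed proper Lie subgroups between the Clifford group and $PU(3^n)$ requires either a direct Lie-algebra computation verifying that the commutators $[C_1 P[i]_3 C_1^{-1}, C_2 P[i]_3 C_2^{-1}]$ for Clifford $C_1, C_2$ span $\mathfrak{su}(3^n)$, or an appeal to a classification of closed subgroups of $PU(d^n)$ containing the irreducibly-acting Pauli subgroup. Note that on a single qutrit $P[i]_3$ alone only enlarges the diagonal phase subgroup from $6$th to $3$rd roots of unity (still finite), so non-commutativity with $H_3$ and $\SUM_3$ is essential; this is where the specific eigenvalue structure of $P[i]_3$ --- beyond its mere non-Cliffordness --- enters the argument.
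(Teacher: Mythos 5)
Your strategy---realize the qutrit Clifford group, show $P[i]_3$ is non-Clifford, then invoke maximality plus a Lie-theoretic argument---is a genuinely different route from the paper's, and it is a viable one, but as written it leaves the crucial density step as an acknowledged gap rather than a proof. The paper instead reduces everything to single-qutrit density via Theorem 1.3 of Brylinski--Brylinski (a dense set of one-qudit gates together with any imprimitive two-qudit gate is universal for $d\geq 3$), so the only real work is to show $\langle H_3, P[i]_3\rangle$ is dense in $SU(3)$. This is done concretely: the commutators $W[i]=H_3P[i]_3H_3^{-1}P[i]_3^{-1}$ and $Z[i]=H_3P[i]_3^{-1}H_3^{-1}P[i]_3$ both fix a common vector $E_i$ and act on $E_i^\perp\cong\mathbb{C}^2$ with eigenvalues $(2\pm i\sqrt{5})/3$; these are roots of $3x^2-4x+3$, a non-cyclotomic irreducible polynomial, hence not roots of unity, so $W[i],Z[i]$ are infinite-order and non-commuting in $SU(E_i^\perp)$ and generate a dense subgroup there by Kitaev's two-generator lemma. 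Since $H_3$ does not preserve $\mathrm{span}\{E_i\}$, Kitaev's stabilizer lemma then bootstraps this to density in $SU(3)$. This is fully self-contained, whereas your plan relies on two imported ingredients you flag but do not close: the precise maximal-finite-subgroup statement (Nebe--Rains--Sloane type) for $PU(3^n)$, and the step ruling out intermediate proper closed Lie subgroups. The latter can actually be made to work (the identity component of the closure is normalized by the Clifford group, whose conjugation action is transitive on the nonidentity Pauli basis of $\mathfrak{su}(3^n)$, forcing the Lie algebra to be all of $\mathfrak{su}(3^n)$), but you would need to carry this out and to cite, or prove, that $\{H_3,\SUM_3,Q[i]_3\}$ generates the full $n$-qutrit Clifford group. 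As submitted, the argument is a plausible program rather than a complete proof; the paper's commutator-plus-Brylinski route avoids both of these dependencies and is more elementary.
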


\begin{rem}
The universal set above has a strong analogy with the standard qubit universal set $\{\textrm{CNOT}, H, T = \pi/8$-$\textrm{gate}\}$ in that $\{\SUM_3, H_3, P[2]_3^2\}$ generate the qutrit Clifford group while $\{\textrm{CNOT}, H, T^2\}$ generate the qubit Clifford group. In this sense, our universal qutrit gate set above is a natural generalization of the standard universal qubit set.
\end{rem}

To prove the theorem, we need the following lemmas.

\begin{lem} \label{non-commuting SU2} \cite{kitaev1997quantum}
Let $U_1, U_2$ be two non-commuting matrices in $\SU(2).$ If they are both of infinite order, then the subgroup generated by $U_1, U_2$ is dense in $\SU(2).$
\end{lem}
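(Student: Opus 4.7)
The plan is to consider the topological closure $H = \overline{\langle U_1, U_2\rangle}$ in $\SU(2)$ and exploit the fact that closed subgroups of a compact Lie group are themselves Lie subgroups, so I can use the classification of closed subgroups of $\SU(2)$. The goal is to show $H = \SU(2)$, from which density follows immediately.

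First, I will reduce the infinite-order hypothesis to a statement about one-parameter subgroups. Since $\SU(2)$ is compact, every element is conjugate to a diagonal matrix $\mathrm{diag}(e^{i\theta}, e^{-i\theta})$, and the cyclic group it generates is dense in the corresponding maximal torus precisely when $\theta/\pi$ is irrational, i.e.\ precisely when the element has infinite order. Applying this to $U_1$ and $U_2$ shows that $H$ contains the two maximal tori $T_1, T_2$ passing through $U_1$ and $U_2$ respectively. The noncommutativity of $U_1$ and $U_2$ implies $T_1 \neq T_2$, since any two elements of a single torus commute.

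Next I will analyze the identity component $H^0$, which is a connected closed (hence Lie) subgroup of $\SU(2)$. Because $\mathfrak{su}(2)$ is simple of dimension three, its Lie subalgebras are only $\{0\}$, the one-dimensional abelian subalgebras, and $\mathfrak{su}(2)$ itself; correspondingly, the connected closed subgroups of $\SU(2)$ are $\{1\}$, maximal tori $U(1)$, and $\SU(2)$. Since $T_1 \subseteq H$ is connected, $T_1 \subseteq H^0$, so $H^0 \neq \{1\}$. If $H^0 \cong U(1)$, then $H^0$ is itself a maximal torus, and since $T_1 \subseteq H^0$ we must have $H^0 = T_1$; but then $T_2 \subseteq H^0 = T_1$, contradicting $T_1 \neq T_2$. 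Therefore $H^0 = \SU(2)$, and since $\SU(2)$ is connected, $H = \SU(2)$.

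The main conceptual input is the classification of closed subgroups of $\SU(2)$, which is standard; the only potentially subtle step is verifying that an infinite-order element $U \in \SU(2)$ really generates a subgroup whose closure is the full maximal torus through $U$, but this is a direct application of Weyl's equidistribution theorem to $\theta/\pi \notin \mathbb{Q}$. No other obstacles are anticipated, so the argument should be short.
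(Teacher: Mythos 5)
Your proof is correct and complete. The paper cites this lemma to Kitaev's 1997 paper rather than proving it, so there is no in-paper argument to compare against; your approach---Cartan's closed-subgroup theorem, the classification of connected closed subgroups of $\SU(2)$ (trivial, maximal torus, or all of $\SU(2)$, since $\mathfrak{su}(2)$ has no two-dimensional subalgebras), and the observation that an infinite-order element $U_i$ generates a dense subgroup of the maximal torus through $U_i$---is the standard proof and matches the cited source. The only cosmetic remark is that invoking Weyl equidistribution is heavier than needed; density of $\{e^{in\theta}\}$ in $U(1)$ for $\theta/\pi$ irrational is already the elementary Kronecker/pigeonhole fact.
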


\begin{lem} \label{stablizer universal} \cite{kitaev1997quantum}
Let $V$ be any finite dimensional Hilbert space. Let $H \subset \SU(V)$ be the stabilizer of some non-zero vector $|\psi\rangle \in V$ and $U \in \SU(V)$ be any operator which does not preserve the space $span\{|\psi\rangle\},$ then the set of operators $\{ H \bigcup U^{-1}HU \}$ generate a dense subgroup of $\SU(V)$.
\end{lem}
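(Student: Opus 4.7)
Set $n := \dim V$; I restrict to $n \geq 3$ (the statement is vacuous for $n=2$ since in that case the stabilizer $H$ is trivial). Write $|\phi\rangle := U^{-1}|\psi\rangle$; the hypothesis that $U$ does not preserve $\mathrm{span}\{|\psi\rangle\}$ translates to $|\phi\rangle \notin \mathrm{span}\{|\psi\rangle\}$, and $U^{-1}HU$ is precisely the stabilizer $H_\phi$ of $|\phi\rangle$, so both $H=H_\psi$ and $H_\phi$ are isomorphic to $\SU(n-1)$ via their actions on the respective orthogonal complements. Let $G$ denote the closure in $\SU(V)$ of the subgroup generated by $H \cup U^{-1}HU$. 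As a closed subgroup of the compact group $\SU(V)$, $G$ is a Lie subgroup, and since $\SU(V)$ is connected it suffices to prove that the Lie algebra $\mathfrak{g}$ of $G$ equals $\mathfrak{su}(V)$.

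The plan is to first classify all $H$-invariant Lie subalgebras of $\mathfrak{su}(V)$ containing $\mathfrak{h}_\psi$ (the $H$-invariance of $\mathfrak{g}$ is automatic from $H \subset G$). Picking coordinates with $|\psi\rangle = e_n$, the adjoint action of $H$ decomposes
$$\mathfrak{su}(V) \;=\; \mathfrak{h}_\psi \;\oplus\; \mathbb{R}\cdot t \;\oplus\; \C^{n-1}$$
into three pairwise non-isomorphic irreducible components: the adjoint representation, a one-dimensional trivial summand spanned by a traceless diagonal generator $t$, and the standard representation sitting in the last row and column. By Schur's lemma any $H$-invariant subspace containing $\mathfrak{h}_\psi$ is a subsum. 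A short bracket computation shows $[\C^{n-1},\C^{n-1}]$ has nonzero projections onto both $\mathfrak{h}_\psi$ and $\mathbb{R}\cdot t$, while $\mathfrak{h}_\psi$ and $\mathbb{R}\cdot t$ commute. Hence the only candidates for $\mathfrak{g}$ are $\mathfrak{h}_\psi$, the enlargement $\mathfrak{h}_\psi \oplus \mathbb{R}\cdot t \cong \mathfrak{u}(n-1)_\psi$, and $\mathfrak{su}(V)$.

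I would then eliminate the first two using $\mathfrak{h}_\phi \subset \mathfrak{g}$ together with $|\phi\rangle \notin \mathrm{span}\{|\psi\rangle\}$. If $\mathfrak{g}=\mathfrak{h}_\psi$, dimensions force $\mathfrak{h}_\phi = \mathfrak{h}_\psi$; but for $n\geq 3$ the common kernel of $\mathfrak{h}_\psi$ is $\mathrm{span}\{|\psi\rangle\}$, so $|\phi\rangle \in \mathrm{span}\{|\psi\rangle\}$, a contradiction. If $\mathfrak{g}\cong\mathfrak{u}(n-1)_\psi$, then every $X \in \mathfrak{h}_\phi$ must satisfy $X|\psi\rangle \in \mathrm{span}\{|\psi\rangle\}$. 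Writing $|\psi\rangle = a|\phi\rangle + b|\chi\rangle$ with $|\chi\rangle$ a unit vector in $|\phi\rangle^\perp$ and $b\neq 0$, observe that $X|\psi\rangle = bX|\chi\rangle \in |\phi\rangle^\perp$: if $a\neq 0$, then $\mathrm{span}\{|\psi\rangle\}\cap|\phi\rangle^\perp = \{0\}$ forces $X|\psi\rangle = 0$ for all $X\in\mathfrak{h}_\phi$, so $\mathfrak{h}_\phi$ lies inside the common kernel of $|\phi\rangle$ and $|\psi\rangle$ (dimension $(n-2)^2-1 < \dim \mathfrak{h}_\phi$), which is impossible; if $a=0$ then $|\psi\rangle$ would have to be a common eigenvector for $\mathfrak{h}_\phi \cong \mathfrak{su}(n-1)$ in its standard representation on $|\phi\rangle^\perp$, which has no common eigenvector for $n\geq 3$. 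Thus $\mathfrak{g}=\mathfrak{su}(V)$ and $G = \SU(V)$, proving the density claim.

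The main obstacle is the Lie-subalgebra classification in the middle step: the $H$-equivariant decomposition itself is routine, but ruling out the candidate $\mathfrak{h}_\psi\oplus\C^{n-1}$ requires an explicit bracket computation. An alternative route that sidesteps this is to show directly that the $G$-orbit of $|\psi\rangle$ is all of the unit sphere of $V$ (using that $H_\phi$ moves $|\psi\rangle$ off $\mathrm{span}\{|\psi\rangle\}$ and $H_\psi$ rotates around $|\psi\rangle$); transitivity of $G$ on $S^{2n-1}$ with stabilizer containing $H\cong\SU(n-1)$ forces $\dim G \geq \dim \SU(V)$, hence $G = \SU(V)$.
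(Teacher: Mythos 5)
The paper does not prove this lemma; it is cited verbatim from Kitaev's 1997 review and used as a black box, so there is no in-paper argument to compare against. Evaluated on its own terms, your Lie-theoretic proof is correct. The key steps all check out: for $n=\dim V\geq 3$ the group generated by $H\cup U^{-1}HU$ is connected (both generators are copies of the connected group $\SU(n-1)$ containing the identity), so its closure $G$ is a connected Lie subgroup and it suffices to show $\mathfrak g=\mathfrak{su}(V)$. The $\mathrm{Ad}(H)$-decomposition $\mathfrak{su}(V)=\mathfrak h_\psi\oplus\mathbb R t\oplus\C^{n-1}$ into pairwise non-isomorphic irreducibles is right (the real dimensions $(n-1)^2-1$, $1$, $2(n-1)$ are distinct for all $n\geq 3$), Schur's lemma then reduces the possibilities, and the bracket $[\C^{n-1},\C^{n-1}]$ indeed escapes into both $\mathfrak h_\psi$ and $\mathbb R t$ (e.g.\ take the two elements with off-diagonal vectors $e_1$ and $ie_1$, whose bracket has a nonzero diagonal trace part). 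Your elimination of $\mathfrak g=\mathfrak h_\psi$ via common kernels, and of $\mathfrak g=\mathfrak h_\psi\oplus\mathbb R t$ via the split into the cases $a\neq 0$ (dimension count against $\mathfrak{su}(n-2)$) and $a=0$ (no common eigenvector of $\mathfrak{su}(n-1)$ in its standard representation), are both sound. One cosmetic remark: for $n=2$ the lemma is not so much vacuous as simply false as stated, since $H=\{I\}$; the paper only ever applies it with $\dim V\geq 3$, so the restriction is harmless but worth phrasing as an exclusion rather than a vacuity. Your alternative sketch (transitivity of the $G$-orbit on the unit sphere plus the orbit-stabilizer dimension count $\dim G\geq (2n-1)+\dim\SU(n-1)=n^2-1$) is also a valid and arguably more elementary route, and is close in spirit to how such density statements are usually argued in the quantum-computing literature.
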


\begin{definition} \cite{brylinski2002universal}
1). A vector $|\psi\rangle \in \mathbb{C}^d \otimes \mathbb{C}^d$ is called decomposable if $|\psi\rangle = |\psi_1\rangle \otimes |\psi_2\rangle$ for some $|\psi_1\rangle, |\psi_2\rangle \in \mathbb{C}^d.$

2). A quantum gate $U \in \textrm{U}(\mathbb{C}^d \otimes \mathbb{C}^d)$ is primitive if it maps decomposable states to decomposable states. Otherwise, $U$ is called imprimitive.

\end{definition}

\begin{lem} \label{SUM}
The gate $\SUM_d$  is imprimitive.
\begin{proof}
Consider the decomposable state $\frac{1}{\sqrt{d}}\sum\limits_{i=0}^{d-1} |i\rangle \otimes |0\rangle.$ It is mapped, by $\SUM_d$, to $\sum\limits_{i=0}^{d-1}\frac{1}{\sqrt{d}}|i\rangle \otimes |i\rangle,$ which is not a decomposable state.
\end{proof}
\end{lem}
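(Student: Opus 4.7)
The plan is to establish imprimitivity of $\SUM_d$ by exhibiting a single decomposable input whose image is provably entangled, since this is all the definition demands.

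First I would pick a natural separable state where the $\SUM_d$ action is transparent. The obvious candidate is
\[
|\psi\rangle \;=\; \left(\frac{1}{\sqrt{d}}\sum_{i=0}^{d-1}|i\rangle\right) \otimes |0\rangle,
\]
which is manifestly a product state. Applying $\SUM_d$ term by term, using $\SUM_d|i,0\rangle = |i,i\rangle$, the image is the maximally correlated state
\[
\SUM_d|\psi\rangle \;=\; \frac{1}{\sqrt{d}}\sum_{i=0}^{d-1}|i\rangle\otimes|i\rangle.
\]

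The remaining step is to verify this image is not decomposable. I would do this by the standard rank argument: if it equaled $|\phi_1\rangle\otimes|\phi_2\rangle$ with $|\phi_1\rangle=\sum_i a_i|i\rangle$ and $|\phi_2\rangle=\sum_j b_j|j\rangle$, then the coefficient matrix $C_{ij}=a_i b_j$ would have rank one. But our state has coefficient matrix $C_{ij}=\frac{1}{\sqrt{d}}\,\delta_{ij}$, the scaled identity, whose rank is $d\geq 2$. This contradiction shows the output is not a product state, so $\SUM_d$ fails to preserve decomposability and is therefore imprimitive.

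There is no serious obstacle here; the only thing to be slightly careful about is making sure the chosen input is genuinely decomposable (it is, by construction as a tensor product) and that the definition of imprimitivity requires only the existence of one offending decomposable state, not a generic statement. Both conditions are clearly met, so the argument is complete once the two computations above are recorded.
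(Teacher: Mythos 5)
Your proof is correct and uses the same witness state $\frac{1}{\sqrt{d}}\sum_{i=0}^{d-1}|i\rangle\otimes|0\rangle$ and the same image computation as the paper's proof. The only difference is that you explicitly justify non-decomposability of the image via the rank of the coefficient matrix, whereas the paper asserts it without comment; this is a harmless elaboration, not a different approach.
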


Set $W[i] = H_3 P[i]_3 H_3^{-1} P[i]_3^{-1}, \, Z[i] = H_3 P[i]_3^{-1} H_3^{-1} P[i]_3, \, i = 0,1,2.$

\begin{lem} \label{1-qutrit universal}
The generalized Hadamard $H_3$ and any gate from $\\ \{P[0]_3, \, P[1]_3, \,P[2]_3\}$ generate a dense subgroup of $\SU(3)$.
\begin{proof}
Direct calculations show that $W[i]$ and $Z[i]$ both have eigenvalues $\{\frac{2 \pm i \sqrt{5}}{3}, 1\}.$ Moreover, $W[i]$ and $Z[i]$ share an eigenvector with eigenvalue $1$, which is the following vector $E_i$, respectively, for $i = 0, 1, 2$,

$E_1 = -|1\rangle + |2\rangle, \quad E_2 = -\omega|0\rangle + |2\rangle, \quad E_3 = -\omega|0\rangle + |1\rangle. $

Clearly, $\frac{2 \pm i \sqrt{5}}{3}$ are the roots of the irreducible polynomial $3x^2 - 4x+3, $ which is not a cyclotomic polynomial. Thus $\frac{2 \pm i \sqrt{5}}{3}$ are not roots of unity. Restricted to $E_i^{\bot},$ the two dimensional orthogonal complement of $E_i$, $W[i]$ and $Z[i]$ are of  infinite order. It is straightforward to check that $W[i]$ and $Z[i]$ do not commute. By Lemma \ref{non-commuting SU2}, $W[i]$ and $Z[i]$ generate a dense subgroup of $\SU(E_i^{\bot}).$

Since $H_3$ does not preserve $span\{E_i\},$ it follows from Lemma \ref{stablizer universal} that $\SU(E_i^{\bot}) \bigcup H_3^{-1} \SU(E_i^{\bot}) H_3$ generate a dense subgroup of $\SU(3).$ Therefore, $\{H_3, P[i]_3\}$ generate a dense subgroup of $\SU(3)$.

\end{proof}
\end{lem}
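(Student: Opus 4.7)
The plan is to apply the two auxiliary lemmas sequentially: first use Lemma \ref{non-commuting SU2} to produce a dense subgroup of an $SU(2)$ sitting inside $SU(3)$, then use Lemma \ref{stablizer universal} with $H_3$ as the conjugating operator to upgrade this to density in the full $SU(3)$. The natural source of elements with a common fixed vector is commutators, so I would introduce the two particular commutators
\[ W[i] = H_3\, P[i]_3\, H_3^{-1}\, P[i]_3^{-1}, \qquad Z[i] = H_3\, P[i]_3^{-1}\, H_3^{-1}\, P[i]_3, \]
which are both built from $H_3$ and $P[i]_3$ alone, so density of the group they generate implies density of $\langle H_3, P[i]_3\rangle$.

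The first step is the explicit eigenvalue computation: writing out $H_3$ and $P[i]_3$ in the standard basis, one computes the characteristic polynomials of $W[i]$ and $Z[i]$ and checks that each has eigenvalues $\{1,\,(2\pm i\sqrt{5})/3\}$. I would identify the shared $1$-eigenvector $E_i$ explicitly for each $i=0,1,2$ (these are the vectors listed in the statement), and verify that $W[i]$ and $Z[i]$ restrict to the two-dimensional orthogonal complement $E_i^{\perp}$. On this restriction they sit inside $SU(E_i^{\perp})\cong SU(2)$.

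Next I would argue infinite order on $E_i^{\perp}$ by invoking the irreducibility of $3x^2-4x+3$ over $\mathbb{Q}$: its roots $(2\pm i\sqrt{5})/3$ cannot be roots of unity, since otherwise they would satisfy a cyclotomic polynomial, and no cyclotomic polynomial equals $3x^2-4x+3$ (the leading coefficient alone rules it out). Then a direct matrix computation shows $W[i]Z[i]\neq Z[i]W[i]$, so Lemma \ref{non-commuting SU2} gives a dense subgroup of $SU(E_i^{\perp})$.

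Finally, to pass from $SU(E_i^{\perp})$ to $SU(3)$, I apply Lemma \ref{stablizer universal} with $V=\mathbb{C}^3$, $|\psi\rangle=E_i$, $H=SU(E_i^{\perp})$ (the stabilizer of $E_i$), and $U=H_3$. The only verification needed is that $H_3$ does not preserve $\mathrm{span}\{E_i\}$, which is a short direct calculation: $H_3 E_i$ is manifestly not a scalar multiple of $E_i$ in each of the three cases. The potentially tedious part, and the only real obstacle, is the bookkeeping of the $3\times 3$ matrix entries when computing $W[i]$, $Z[i]$, their characteristic polynomials, and the shared eigenvector; everything else is a direct application of the quoted lemmas.
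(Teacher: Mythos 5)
Your proposal is correct and follows essentially the same route as the paper: you introduce the identical commutators $W[i]$ and $Z[i]$, compute the shared eigenvalue $1$ with common eigenvector $E_i$, argue via the non-cyclotomic minimal polynomial $3x^2-4x+3$ that the other eigenvalues are not roots of unity (hence infinite order on $E_i^\perp$), invoke Lemma \ref{non-commuting SU2} to get density in $\SU(E_i^\perp)$, and then bootstrap to $\SU(3)$ via Lemma \ref{stablizer universal} by checking $H_3$ does not preserve $\mathrm{span}\{E_i\}$. This matches the paper's proof step for step.
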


\begin{proof}[Proof of Theorem \ref{qutrit thm}]
By Theorem $1.3$ in \cite{brylinski2002universal}, the collection of $1$-qudit gates with any imprimitive $2$-qudit gate form a universal gate set for $d \geq 3$. By Lemma \ref{SUM}, $\SUM_3$ is an imprimitive 2-qutrit gate. By Lemma \ref{1-qutrit universal}, $H_3$ and any gate from $\{P[0]_3, \, P[1]_3, \,P[2]_3\}$ generate a dense subgroup of the group of all $1$-qutrit gates. Thus, the gates from our theorem form a universal gate set.
\end{proof}

To state the next theorem, we introduce a qutrit coherent projective measurement.

\begin{measurement} \label{measure 0 12}
The projection of a state in the qutrit space $\C^3$ to $span\{|0\rangle\}$ and its orthogonal complement $span\{|1\rangle, |2\rangle\}$ so that the resulting state, if it is in $span\{|1\rangle, |2\rangle\}$,  is coherent.
\end{measurement}

\begin{thm} \label{qutrit thm2}
The following gate set is universal for the qutrit quantum circuit model.

1). The generalized Hadamard gate $H_3$

2). The $\SUM$ gate $\SUM_3$

3). Any gate from $\{Q[i]_3, i= 0, 1 , 2\}$

4). Any non-trivial $1$-qutrit classical gate not equal to $H_3^2$.

5). Measurement \ref{measure 0 12}

\begin{rem}

\begin{enumerate}
\item In \cite{cui2014universal}, a stronger theorem is proved: the gate set in Theorem $2$ with the gate from $3$) removed is already universal. We proved this stronger theorem by picking a qubit $\C^2$ inside a qutrit $\C^3$ and showing that one can approximate arbitrary unitary $U \in SU(2^n)$. We can then deduce universality for the qutrit circuit by encoding a qutrit with two qubits $\mathbb{C}^2 \otimes \mathbb{C}^2 \subset \mathbb{C}^3 \otimes \mathbb{C}^3$. For instance, we can use $|00\rangle, \, |01\rangle, \, |10\rangle$ to encode $|0\rangle, \, |1\rangle, \, |2\rangle $, respectively. And the basis element $|11\rangle$ is left unused. But it is not known if the reduced qutrit set can be used to approximate arbitrary qutrit gates directly $($$i.e.$ not by encoding a qutrit with two qubits$)$. Neither is it known if the gates from $3$ can be constructed out of the reduced gate set.

\item Comparing this theorem with Corollary \ref{qudit cor} below, we see another difference between qutrit and qupits:  the analogous gates through $1$) to $4$) are already a universal gate set for the qupit quantum circuit model, but not so for the qutrit model.

\item If we restrict the choice of gate from $3$) on $i=1, 2$, then we can drop the gate in $4$) while still keep the universality of the rest. This is because $H_3^2$ is the classical gate which swaps $|1\rangle$ and $|2\rangle$, so with $H_3$ and one of $Q[1]_3$ , $Q[2]_3$, we can obtain the other one. Since $Z_3 = Q[1]_3 Q[2]_3^2$, and $X_3 = H_3^{-1}Z_3H_3$, we can construct the generalized $X$ gate $X_3$, which is a classical gate not equal to $H_3$.

\end{enumerate}
\end{rem}

\begin{proof}
We prove this theorem by showing that we can construct all the gates in Theorem \ref{qutrit thm}. Since $H_3^2$ is a classical gate, the gate from $4)$ together with $H_3^2$ generate all the $1$-qutrit classical gates. It is clear that we only need to construct $P[i]_3$ for some $i.$ Without loss of generality, we assume the gate from $3$) is $Q[2]_3$, since we can permute the basis elements with the classical $1$-qutrit gates. From the identity $P[2]_3 = Q[2]_3 \Flip[2]_3,$ it suffices to construct $\Flip[2]_3$.  The construction of the sign-flip gate was given as an exercise in \cite{KitaevHM} and a detailed proof can be found in Section $2.5$ of \cite{cui2014universal}. For completeness, we also give the proof in Lemma \ref{Flip construction}.
\end{proof}
\end{thm}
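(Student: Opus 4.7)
The plan is to reduce to Theorem \ref{qutrit thm}: if I can synthesize a single gate $P[i]_3$ from the resources of items 1)--5), then $\{H_3,\SUM_3,P[i]_3\}$ is already universal, giving Theorem \ref{qutrit thm2}.

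First I would reduce to a convenient choice of $i$. Since $H_3^2$ acts on the computational basis as the transposition swapping $|1\rangle$ and $|2\rangle$, it is itself a non-trivial 1-qutrit classical gate. Together with the classical gate of item 4), which by hypothesis is non-trivial and distinct from $H_3^2$, it generates the full permutation group $S_3$ acting on $\{|0\rangle,|1\rangle,|2\rangle\}$. Conjugating the given $Q[i]_3$ by any such permutation yields $Q[\sigma(i)]_3$, so without loss of generality the gate of item 3) is $Q[2]_3$.

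Next I would use the diagonal identity $P[2]_3 = Q[2]_3^2\,\Flip[2]_3$, verified entrywise: both sides fix $|0\rangle$ and $|1\rangle$ and multiply $|2\rangle$ by $\omega_3^2\cdot(-1) = -\omega_3^2$. Since $Q[2]_3^2$ is directly available, the task reduces to constructing the sign-flip gate $\Flip[2]_3$.

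Constructing $\Flip[2]_3$ is the main obstacle, and it is the only step that uses Measurement \ref{measure 0 12}. The gates in items 1), 2), 3) all lie in the qutrit Clifford group, so no combination of them alone can produce the non-Clifford gate $\Flip[2]_3$; the coherent projection of item 5) is what breaks Clifford. My plan, which I would isolate as a separate lemma, is a measurement-assisted construction: couple the input qutrit to a computational-basis ancilla using $\SUM_3$ and $H_3$, apply Measurement \ref{measure 0 12} to the ancilla, and---depending on the outcome---either declare success or apply a Clifford correction built from $H_3$, $\SUM_3$, $Q[2]_3$, and the $S_3$ of classical gates to repair the post-measurement state. The essential property of Measurement \ref{measure 0 12} being exploited is that it distinguishes $|0\rangle$ from $\mathrm{span}\{|1\rangle,|2\rangle\}$ while preserving coherence within $\mathrm{span}\{|1\rangle,|2\rangle\}$, which is precisely what is needed to inject a relative phase between $|1\rangle$ and $|2\rangle$ without decohering the register. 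With $\Flip[2]_3$, and hence $P[2]_3$, in hand, Theorem \ref{qutrit thm} completes the proof.
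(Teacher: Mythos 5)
Your proposal follows the same route as the paper: reduce to Theorem~\ref{qutrit thm}, use the classical gates from items $4)$ and $H_3^2$ to permute the basis and assume without loss of generality that item $3)$ supplies $Q[2]_3$, factor $P[2]_3$ into an available diagonal Clifford gate times $\Flip[2]_3$, and defer the construction of $\Flip[2]_3$ to a separate measurement-assisted lemma. In fact your entrywise verification $P[2]_3 = Q[2]_3^2\,\Flip[2]_3$ is the correct identity: $P[2]_3|2\rangle = -\omega_3^2|2\rangle$ while $Q[2]_3\Flip[2]_3|2\rangle = -\omega_3|2\rangle$, so the paper's stated identity $P[2]_3 = Q[2]_3\Flip[2]_3$ has a typo (harmless, since $Q[2]_3^2$ is equally available). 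The one place where your sketch diverges from what is actually needed is the $\Flip[2]_3$ lemma: you describe it as a ``measure, then apply a Clifford correction to the post-measurement state'' protocol, whereas the paper's Lemma~\ref{Flip construction} is a repeat-until-success protocol. It first prepares a \emph{fixed} ancilla state $|\psi\rangle = \tfrac{1}{\sqrt{3}}(|0\rangle - |1\rangle + |2\rangle)$ using $H_3$, $\SUM_3$, and the coherent projection of Measurement~\ref{measure 0 12}; it then applies $\SUM$ to $|\phi\rangle|\psi\rangle$ and measures the ancilla in the computational basis, obtaining each of $\Flip[0]$, $\Flip[1]$, $\Flip[2]$ on the register with probability $1/3$; the whole process is repeated, accumulating random sign flips, until the product equals $\pm\Flip[2]_3$, which happens with probability converging to $1$ exponentially fast. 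No ``Clifford repair'' of a wrong outcome is performed; the randomness is simply reapplied. If you spell out your lemma, it needs to produce this ancilla-state preparation and the probabilistic accumulation argument, since the non-Clifford $\Flip[2]_3$ cannot be obtained deterministically from a single measurement plus Clifford post-processing.
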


\begin{lem}\cite{KitaevHM},\cite{cui2014universal} \label{Flip construction}
The gate $\Flip[2]$ can be constructed probabilistically. Moreover, the probability to construct $\Flip[2]$ approaches to $1$ exponentially fast in the number of gates and measurements given in Theorem \ref{qutrit thm2}.
\begin{proof}
It's not hard to see that with the gates and measurement from Theorem \ref{qutrit thm2}, the following states and operations can be implemented.

$1)$. $\widetilde{|i\rangle} = \frac{1}{\sqrt{3}}\sum\limits_{j=0}^{2}\omega^{ij}|j\rangle = H|i\rangle$, $i = 0, \; 1, \; 2 $.

$2)$. Projection of a $1$-qutrit state to any computational basis vector, preserving the coherence of the orthogonal complement. For example, projection to $span\{|2\rangle\}$ and its complement $span\{|0\rangle, |1\rangle\}$.

$3)$. Measurement of a qutrit in the standard computational basis.

$4)$. Projection to $span\{\widetilde{|1\rangle}, \widetilde{|2\rangle}\}$ and its complement $span\{\widetilde{|0\rangle}\}$

To obtain $\Flip[2]$, we first construct the ancilla $|\psi\rangle = \frac{1}{\sqrt{3}}(|0\rangle - |1\rangle + |2\rangle)$ as follows.

Prepare the state $\widetilde{|1\rangle}\widetilde{|2\rangle},$ and project each qutrit to the space $span\{|0\rangle, |1\rangle\}$ to obtain the state $|\eta\rangle = \frac{1}{2}(|0\rangle + \omega |1\rangle) \otimes (|0\rangle + \omega^2 |1\rangle).$ Apply the $SUM$ gate to $|\eta\rangle$ and then project the first qutrit of the resulting state to the space $span\{\widetilde{|0\rangle}\}$. It's easy to see on the second qutrit we get the state $|\psi\rangle.$

Now for a state $|\phi \rangle = c_0 |0\rangle + c_1|1\rangle + c_2|2\rangle, $ apply the $SUM$ gate to $|\phi\rangle |\psi\rangle$ and then measure the second qutrit in the standard basis. If the outcome is $|0\rangle,$ then the first qutrit is  $c_0 |0\rangle + c_1|1\rangle - c_2|2\rangle. $ If the outcome is $|1\rangle$, then the first qutrit is $-c_0 |0\rangle + c_1|1\rangle + c_2|2\rangle, $ and if the outcome is $|2\rangle,$ then the first qutrit is $c_0 |0\rangle - c_1|1\rangle + c_2|2\rangle $. Moreover, the probability for each case is $\frac{1}{3}$. Therefore, this process changes the sign of some coefficient randomly. Repeat this process until we get the gate $\Flip[2]$. Note that we will also stop repeating the process if we obtain the gate $\Flip[0]*\Flip[1]$, which is the same as $\Flip[2]$ up to a global sign.

Let $p_n$ be the probability that $\Flip[2]$ is obtained $($up to a global sign$)$ with no more than $n$ times of the process. It's not hard to derive a recursive formula for $p_n:$
\begin{equation}
p_n = p_{n-1}+ \frac{1}{3}(1-p_{n-1}) = \frac{1}{3}+\frac{2}{3}p_{n-1}, \quad  p_1 = \frac{1}{3}.
\end{equation}

Therefore, we have $p_n = 1 - (\frac{2}{3})^n$, which approximates to $1$ exponentially fast.

\end{proof}
\end{lem}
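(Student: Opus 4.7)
The plan is to reduce $\Flip[2]$ to a probabilistic gadget built from a carefully chosen one-qutrit ancilla: I prepare a state $|\psi\rangle$, couple it to the target qutrit via $\SUM_3$, and measure the second register in the computational basis. A direct calculation should show this implements one of $\Flip[0], \Flip[1], \Flip[2]$ on the first register uniformly at random (up to a global sign). Iterating this gadget with the right stopping rule then accumulates to $\Flip[2]$ with probability exponentially close to $1$.

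First I would assemble the primitive operations available from Theorem \ref{qutrit thm2}. Since $H_3^2$ is the classical transposition of $|1\rangle$ and $|2\rangle$, combining it with any other non-trivial 1-qutrit classical gate gives every permutation of the computational basis. Conjugating Measurement \ref{measure 0 12} by these permutations upgrades it to the coherent projective measurement $|k\rangle\langle k|$ vs.\ its orthogonal complement for every $k \in \{0,1,2\}$; conjugating instead by $H_3$ gives the analogous measurement in the Fourier basis $\widetilde{|k\rangle} = H_3|k\rangle$. The Fourier states $\widetilde{|i\rangle}$ themselves are prepared by applying $H_3$ to $|i\rangle$.

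Next I would construct the ancilla $|\psi\rangle = \tfrac{1}{\sqrt{3}}(|0\rangle - |1\rangle + |2\rangle)$. The idea is to start from a product of two Fourier states such as $\widetilde{|1\rangle}\otimes \widetilde{|2\rangle}$, project each tensor factor onto $\text{span}\{|0\rangle, |1\rangle\}$ to kill the third component and leave a two-level superposition carrying a non-trivial phase, apply $\SUM_3$ to entangle the two registers, and postselect the control register onto $\widetilde{|0\rangle}$. A short calculation with the explicit Fourier coefficients should show the second register collapses to $|\psi\rangle$ up to global phase. With $|\psi\rangle$ in hand, I couple it to an arbitrary input $|\phi\rangle = c_0|0\rangle + c_1|1\rangle + c_2|2\rangle$ via $\SUM_3$ and measure the ancilla in the computational basis; regrouping $\SUM_3(|\phi\rangle \otimes |\psi\rangle)$ by outcome $m \in \{0,1,2\}$ should give equal probabilities $1/3$ with post-measurement state $\Flip[m]|\phi\rangle$ (modulo a global sign).

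Finally I would iterate this random-flip gadget, stopping as soon as the accumulated product of applied flips equals $\Flip[2]$ or $\Flip[0]\,\Flip[1]$ (the latter equals $-\Flip[2]$, which is acceptable since the global sign is irrelevant). Because each new flip is uniform and independent of the history, conditioning on whether the accumulated product already lies in the good coset yields the recursion $p_n = \tfrac{1}{3} + \tfrac{2}{3}p_{n-1}$ with $p_1 = \tfrac{1}{3}$, which solves to $p_n = 1-(2/3)^n$. The main obstacle will be the ancilla design: choosing the right pair of Fourier starting states and the right postselection so that the surviving sign pattern is exactly $(+,-,+)$ rather than some unwanted relative phase. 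Once that piece is verified, the teleportation identity for the SUM-and-measure step and the Markov-style convergence bound are routine.
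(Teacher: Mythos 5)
Your proposal follows essentially the same route as the paper: the same ancilla $|\psi\rangle = \tfrac{1}{\sqrt{3}}(|0\rangle - |1\rangle + |2\rangle)$ prepared from $\widetilde{|1\rangle}\otimes\widetilde{|2\rangle}$ via projection onto $\mathrm{span}\{|0\rangle,|1\rangle\}$, $\SUM_3$, and postselection on $\widetilde{|0\rangle}$; the same $\SUM$-and-measure gadget producing a uniformly random $\Flip[m]$; and the same stopping rule and recursion $p_n = \tfrac{1}{3} + \tfrac{2}{3}p_{n-1}$. The only difference is that you leave the sign-pattern verification as a check to be done, whereas the paper carries it out explicitly, but the construction you describe is exactly the one that works.
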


\subsection{A universal qupit gate set}

\begin{thm} \label{qudit thm}
The following gate set is universal for the qupit quantum circuit model for $p\geq 5$:

1). The generalized Hadamard gate $H_p$

2). The $\SUM$ gate $\SUM_p$

3). The gates $Q[i]_p, i = 1, \cdots, p-1.$

\begin{rem}
Note that the gate $Q[0]_p$ can be constructed from $Q[i]_p, i = 1, \cdots, p-1,$ since $\prod\limits_{i=0}^{p-1} Q[i]_p = \omega_p Id. $
\end{rem}
\begin{proof}
The proof is analogous to that of Theorem \ref{qutrit thm}. By Lemma \ref{1-qudit universal} below, the gates in $1$) and $3$) generate a a dense subgroup of $\SU(p).$ By Lemma \ref{SUM}, $\SUM_p$ is an imprimitive gate. Again by Theorem $1.3$ in \cite{brylinski2002universal}, this is a universal gate set.
\end{proof}

\end{thm}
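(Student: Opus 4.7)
The plan is to invoke the Brylinski–Brylinski criterion \cite{brylinski2002universal} exactly as in Theorem \ref{qutrit thm}: any imprimitive two-qudit gate together with a collection of one-qudit gates that generates a dense subgroup of $\SU(d)$ forms a universal gate set whenever $d \geq 3$. Since Lemma \ref{SUM} already establishes that $\SUM_p$ is imprimitive, the entire proof reduces to the one-qudit density statement, namely that $H_p$ together with the family $\{Q[i]_p\}_{i=1}^{p-1}$ generates a dense subgroup of $\SU(p)$.

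For this density statement I would imitate the structure of Lemma \ref{1-qutrit universal}. Form the commutators $W[i] := H_p Q[i]_p H_p^{-1} Q[i]_p^{-1}$ and $Z[i] := H_p Q[i]_p^{-1} H_p^{-1} Q[i]_p$. Because $Q[i]_p$ differs from the identity only by a phase on the single basis vector $|i\rangle$, the conjugate $H_p Q[i]_p H_p^{-1}$ is a rank-one perturbation of the identity, and a direct calculation generalizing the qutrit case should exhibit explicit vectors $E_i$ fixed by both $W[i]$ and $Z[i]$, so that the commutators act nontrivially only on the orthogonal complement $E_i^{\perp}$.

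The qutrit argument finishes by appealing to Lemma \ref{non-commuting SU2}, which is specific to $\SU(2)$, and then climbing one dimension up via Lemma \ref{stablizer universal}. For $p \geq 5$, however, the restricted action lives in $\SU(p-1)$ with $p-1 \geq 4$, so that shortcut is unavailable; instead I would proceed inductively. The idea is to pick two indices $i \neq j$ and restrict a suitably chosen product of commutators to a two-dimensional invariant subspace on which Lemma \ref{non-commuting SU2} applies and produces a dense copy of $\SU(2)$. Then bootstrap by repeated application of Lemma \ref{stablizer universal}: at each stage one selects another $Q[k]_p$ or $H_p$-conjugate thereof that fails to preserve the currently stabilized vector, promoting density from $\SU(k)$ to $\SU(k+1)$. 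Having $p-1$ distinct diagonal generators $Q[i]_p$, one per nontrivial coordinate direction, together with $H_p$ to mix them, affords ample flexibility to arrange such a chain. A final application of Lemma \ref{stablizer universal} with $H_p$, which manifestly does not preserve $\mathrm{span}\{E_i\}$, then promotes $\SU(E_i^{\perp})$-density to full $\SU(p)$-density.

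The main obstacle I anticipate is the very first step of this bootstrap: exhibiting a two-dimensional invariant subspace on which the restricted operator acts with eigenvalues that are provably not roots of unity. In the qutrit case this reduced to the explicit irreducible polynomial $3x^2 - 4x + 3$, whose roots $(2 \pm i\sqrt{5})/3$ are patently non-cyclotomic. For general odd prime $p$ the eigenvalues of $W[i]$ are governed by Gauss-sum-like expressions in $\omega_p = e^{2\pi i/p}$, and verifying that such an eigenvalue is not a root of unity is the delicate number-theoretic input. A clean workaround to try is to compute the trace of an explicit $2 \times 2$ block of a product such as $W[i] Z[j]$ on a carefully chosen invariant subspace and then show, by an absolute-value bound or by an algebraic-integer argument, that the trace cannot be that of a finite-order element of $\SU(2)$. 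Once this anchor is in place, the remainder of the argument is a routine iteration of Lemmas \ref{non-commuting SU2} and \ref{stablizer universal}.
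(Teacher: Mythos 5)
Your high-level reduction to the Brylinski criterion via Lemma \ref{SUM} is correct and matches the paper, but your plan for the one-qupit density statement has a structural error that makes the subsequent bootstrap both unnecessary and ill-posed. You assert that the commutators $W[i]=H_p Q[i]_p H_p^{-1}Q[i]_p^{-1}$ and $Z[i]=H_p Q[i]_p^{-1}H_p^{-1}Q[i]_p$ fix a single vector $E_i$ and act on the $(p-1)$-dimensional complement $E_i^{\perp}$, so that for $p\geq 5$ the restriction ``lives in $\SU(p-1)$.'' This is backwards. Since $Q[i]_p = I + (\omega_p - 1)\lvert i\rangle\langle i\rvert$ and $H_p Q[i]_p H_p^{-1} = I + (\omega_p - 1)\lvert\phi_i\rangle\langle\phi_i\rvert$ with $\lvert\phi_i\rangle = H_p\lvert i\rangle$, each commutator is a product of two rank-one unitary perturbations, hence acts as the \emph{identity} on the $(p-2)$-dimensional complement of $S_i = \mathrm{span}\{\lvert i\rangle,\lvert\phi_i\rangle\}$ and nontrivially only on the two-dimensional $S_i$. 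This is exactly the content of the paper's Lemma \ref{XY[i]_d} (cited from \cite{aharonov9906129fault}), which also records that the restrictions are non-commuting and of infinite order; Lemma \ref{non-commuting SU2} then gives density in $\SU(S_i)\cong\SU(2)$ directly. The ``delicate number-theoretic input'' you anticipate is therefore already packaged in that cited lemma, and no search for an ad hoc two-dimensional invariant block of a product $W[i]Z[j]$ is needed.

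The second gap is in the climb from these $\SU(2)$'s to $\SU(p)$. You propose iterating Lemma \ref{stablizer universal}, but that lemma requires at each stage a group that is the \emph{full} stabilizer $\SU(V)_{\lvert\psi\rangle}\cong\SU(\dim V - 1)$ of a vector inside an invariant subspace $V$, together with a gate that stays inside $V$ yet moves $\mathrm{span}\{\lvert\psi\rangle\}$. Starting from a single $\SU(S_i)\cong\SU(2)$ inside $\mathbb{C}^p$ you are nowhere near such a stabilizer, and the available gates need not preserve any intermediate $k$-dimensional subspace, so the inductive step does not go through as stated. The paper instead uses Lemma \ref{non-orthogonal} (also from \cite{aharonov9906129fault}): if $A,B$ are non-orthogonal subspaces then $\SU(A)$ and $\SU(B)$ generate a dense subgroup of $\SU(A+B)$. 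Since the $S_i$ pairwise share the vectors $\lvert\phi_i\rangle$-type overlaps and $\sum_i S_i = \mathbb{C}^p$, one glues the dense $\SU(S_i)$'s together by induction on $i$. Replacing your stabilizer bootstrap with this non-orthogonal-subspaces lemma, and correcting the fixed-space/acting-space dichotomy for the commutators, recovers the paper's argument.
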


\begin{cor} \label{qudit cor}
$H_p, \SUM_p$ and all the $1$-qupit classical gates, together with some $Q[i]_p$, form a universal qupit gate set.
\end{cor}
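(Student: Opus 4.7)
The plan is to reduce the corollary to Theorem \ref{qudit thm}, which already gives universality of the set $\{H_p, \SUM_p, Q[1]_p, \dots, Q[p-1]_p\}$. Since our hypothesized gate set contains $H_p, \SUM_p$, and one distinguished $Q[i_0]_p$, it suffices, using $H_p$, $\SUM_p$, and the $1$-qupit classical gates, to manufacture $Q[j]_p$ for every $j \in \{1,\dots,p-1\}$.

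The first step is the observation that every $1$-qupit classical gate is a permutation matrix $P_\sigma$ associated with some $\sigma \in S_p$ acting on basis labels, and that a one-line check on basis vectors yields the conjugation identity
\[
P_\sigma\, Q[i]_p\, P_\sigma^{-1} \;=\; Q[\sigma(i)]_p.
\]
Since $S_p$ acts transitively on $\{0,1,\dots,p-1\}$, for each target index $j$ we may pick a permutation $\sigma$ with $\sigma(i_0) = j$, lift it to the classical gate $P_\sigma$ (available by hypothesis), and conjugate $Q[i_0]_p$ by $P_\sigma$ to obtain $Q[j]_p$.

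The second step is simply to invoke Theorem \ref{qudit thm}. Having constructed every $Q[j]_p$ for $j = 1,\dots,p-1$ together with the given $H_p$ and $\SUM_p$, the hypothesis of that theorem is satisfied, so the generated gate set is dense in the group of all qupit unitaries and is therefore universal. (The gate $Q[0]_p$ is never needed, as noted in the remark following Theorem \ref{qudit thm}.)

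There is no substantive obstacle: the only content is the conjugation identity, which is immediate, and the use of the transitivity of $S_p$, which is automatic once one is granted all classical $1$-qupit gates. The corollary is essentially a restatement of Theorem \ref{qudit thm} economized by recognizing that a single $Q[i]_p$ is equivalent, modulo permutation conjugation, to all of them.
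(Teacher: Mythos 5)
Your proof is correct, and it is exactly the natural argument the corollary calls for (the paper states the corollary without a separate proof, as it is an immediate consequence of Theorem \ref{qudit thm}). The conjugation identity $P_\sigma Q[i]_p P_\sigma^{-1} = Q[\sigma(i)]_p$, combined with the transitivity of $S_p$, shows that a single $Q[i_0]_p$ together with all classical gates generates the full family $\{Q[j]_p : j=1,\dots,p-1\}$, at which point Theorem \ref{qudit thm} applies directly.
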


The rest of this section is devoted to a proof of Lemma $\ref{1-qudit universal}.$

Set $X[i] = H_p Q[i]_p H_p^{-1} Q[i]_p^{-1}, \, Y[i] = H_p Q[i]_p^{-1} H_p^{-1} Q[i]_p,\, i = 0, \cdots, p-1.$ And define
$$
S_i = span\{|i\rangle, \sum\limits_{j \neq i} \omega_p^{ij}|j\rangle\}.
$$

\begin{lem} \label{XY[i]_d} \cite{aharonov9906129fault}
$X[i], Y[i]$ act as the identity on $S_i^{\bot},$ and are of infinite order confined in $S_i$. Moreover, they do not commute.
\begin{rem}
It is worth noting that Lemma \ref{XY[i]_d} does not hold for $d = 3.$ In the case of $d=3$, we need to replace $Q[i]$ by its square root $P[i].$ This is how we defined $W[i], Z[i]$ in Subsection \ref{universal qutrit}.
\end{rem}
\end{lem}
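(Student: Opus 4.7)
The plan is to reduce everything to the rank-one structure of $Q[i]_p$ and its Hadamard conjugate. First I would write $Q[i]_p = I + (\omega_p-1)|i\rangle\langle i|$ and note that $H_p Q[i]_p H_p^{-1} = I + (\omega_p-1)|\tilde i\rangle\langle\tilde i|$, where $|\tilde i\rangle := H_p|i\rangle = \frac{1}{\sqrt p}\omega_p^{i^2}|i\rangle + \frac{1}{\sqrt p}\sum_{j\neq i}\omega_p^{ij}|j\rangle$. A direct inspection shows $|\tilde i\rangle \in S_i$, so both $|i\rangle\langle i|$ and $|\tilde i\rangle\langle\tilde i|$ annihilate every vector of $S_i^\perp$. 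Consequently $X[i]$ and $Y[i]$, each a product of two such rank-one perturbations of the identity, act trivially on $S_i^\perp$, and unitarity then forces them to preserve $S_i$.

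Next I would pass to the restriction on $S_i$. Since $\det Q[i]_p = \omega_p$ is conjugation-invariant, $\det X[i] = \det Y[i] = 1$, so the restrictions lie in $\SU(S_i) \cong \SU(2)$ and are determined up to conjugacy by their trace. Expanding the product and using $|\langle i|\tilde i\rangle|^2 = 1/p$ gives
\[
\text{tr}_{S_i}(X[i]) \;=\; \frac{(p-1)(\omega_p + \omega_p^{-1}) + 2}{p}.
\]
To rule out roots of unity, I would work $\pi$-adically at the totally ramified prime $\pi = 1-\omega_p$ of $\Z[\omega_p]$. The identity $\omega_p + \omega_p^{-1} = 2 + \omega_p^{-1}\pi^2$ rewrites the numerator as $2p + (p-1)\omega_p^{-1}\pi^2$, which has $\pi$-adic valuation exactly $2$ once $p \geq 5$ (the $\pi^2$-term dominates $2p$, whose valuation is $p-1 \geq 4$). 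Hence the trace itself has negative $\pi$-adic valuation and cannot be an algebraic integer, whereas $2\cos(2\pi m/n)$ always is; therefore the eigenvalues of $X[i]|_{S_i}$ are not roots of unity. The same computation, with $\omega_p$ exchanged for $\omega_p^{-1}$, handles $Y[i]$.

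For non-commutativity I would pass to an orthonormal basis $\{|i\rangle, |e_2\rangle\}$ of $S_i$, where $|e_2\rangle$ spans the orthogonal complement of $|i\rangle$ within $S_i$, and write $X[i]|_{S_i}$ and $Y[i]|_{S_i}$ as explicit $2\times 2$ matrices using the coordinates of $|\tilde i\rangle$ in that basis. A short calculation shows that the $(1,2)$-entry of the commutator $X[i]Y[i] - Y[i]X[i]$ is a nonzero multiple of $(\omega_p - 1)(\omega_p^{-1} - 1) = 2 - 2\cos(2\pi/p)$, which is nonzero, so the two operators do not commute on $S_i$.

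The main obstacle is the middle step: ruling out roots of unity for $\text{tr}_{S_i}(X[i])$. The rank-one reductions and the final $2\times 2$ comparison are routine linear algebra, but the infinite-order assertion is genuinely number-theoretic and requires the cyclotomic valuation argument above. This step is also precisely where the hypothesis $p \geq 5$ is essential: at $p = 3$ the numerator $(p-1)(\omega_p + \omega_p^{-1}) + 2$ collapses to $0$, the trace vanishes, and $X[i]|_{S_i}$ has finite order $4$—exactly the phenomenon flagged by the remark following the lemma and the reason the authors substitute the square root $P[i]$ for $Q[i]$ in the qutrit case.
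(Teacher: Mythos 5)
The paper never proves this lemma internally --- it is stated with a citation to \cite{aharonov9906129fault} and used as a black box in the proof of Lemma~\ref{1-qudit universal}. So there is no ``paper's proof'' to compare against; what you have done is supply a self-contained argument, and it is correct. The rank-one decomposition $Q[i]_p = I + (\omega_p-1)|i\rangle\langle i|$, $H_pQ[i]_pH_p^{-1} = I + (\omega_p-1)|\tilde i\rangle\langle\tilde i|$ with $|\tilde i\rangle \in S_i$ immediately gives the triviality on $S_i^{\perp}$, and the trace computation
\[
\operatorname{tr}_{S_i}(X[i]) \;=\; \frac{(p-1)(\omega_p+\omega_p^{-1})+2}{p}
\]
checks out (global trace $p + (c-2) + (2-c)/p$ with $c=\omega_p+\omega_p^{-1}$, minus $p-2$ from the identity block). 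Your ramification argument is the genuinely nonroutine step and is the right tool: $v_\pi(2p)=p-1$ while $v_\pi\bigl((p-1)\omega_p^{-1}\pi^2\bigr)=2$, so for $p\geq 5$ the numerator has valuation exactly $2$ and the trace has valuation $3-p<0$, hence is not an algebraic integer, hence cannot equal $\zeta+\zeta^{-1}$ for any root of unity $\zeta$. This is cleaner and more conceptual than directly checking a cyclotomic polynomial, and it explains the $p\geq 5$ hypothesis structurally: at $p=3$ the numerator collapses to $0$ and the restriction becomes an order-$4$ element, exactly the phenomenon the paper's remark flags.

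The one soft spot is the non-commutativity, where you assert without showing that the $(1,2)$-entry of the commutator is a nonzero multiple of $(\omega_p-1)(\omega_p^{-1}-1)$. Since you have already established that $X[i]|_{S_i}$ and $Y[i]|_{S_i}$ lie in $\SU(S_i)\cong\SU(2)$, share the same (irrational) trace, and have infinite order, a cleaner finish is available: two non-central elements of $\SU(2)$ with the same trace commute iff they are equal or inverse to one another, and both $X[i]=Y[i]$ and $X[i]=Y[i]^{-1}$ reduce to $Q^2 = HQ^2H^{-1}$ and $[Q,HQH^{-1}]=I$ respectively, both of which visibly fail because $Q$ and $HQH^{-1}$ are diagonal in genuinely different bases. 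That would close the argument without any coordinate computation.
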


 Given a subspace $A \subset \mathbb{C}^d$, $\SU(A)$ is identified with the subgroup of $\SU(\mathbb{C}^d)$ whose elements are the identity on the orthogonal complement of $A$.

\begin{lem} \label{non-orthogonal} \cite{aharonov9906129fault}
If $A, B$ are two non-orthogonal subspaces of $\mathbb{C}^p$, then $\SU(A) , \SU(B)$ generate a dense subgroup of $\SU(A+B).$
\end{lem}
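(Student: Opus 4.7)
The plan is to reduce the density statement to a Lie-algebra computation. Setting $V = A+B$ and letting $\mathfrak{g} \subseteq \mathfrak{su}(V)$ denote the Lie subalgebra generated by $\mathfrak{su}(A)$ and $\mathfrak{su}(B)$, it suffices to show $\mathfrak{g} = \mathfrak{su}(V)$: since $\SU(A)$ and $\SU(B)$ are connected compact Lie subgroups of $\SU(V)$, the closure of the subgroup they generate is the connected closed Lie subgroup of $\SU(V)$ with Lie algebra $\mathfrak{g}$, and density is equivalent to this Lie algebra equaling $\mathfrak{su}(V)$.

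First I would show that $\mathfrak{g}$ acts irreducibly on $V$. As an $\mathfrak{su}(A)$-module, $V$ splits into isotypic components $A \oplus A^{\perp_V}$ (standard plus trivial), so every $\mathfrak{su}(A)$-invariant subspace of $V$ has the form $W_A \oplus W'$ with $W_A \in \{0, A\}$ and $W' \subseteq A^{\perp_V}$ arbitrary; the analogous dichotomy holds for $\mathfrak{su}(B)$. For a nonzero $\mathfrak{g}$-invariant $W$, the mixed cases (e.g.\ $A \subseteq W \subseteq B^{\perp_V}$) would force $A \perp B$, contradicting the hypothesis, and the zero-zero case yields $W \subseteq A^{\perp_V} \cap B^{\perp_V} = (A+B)^{\perp_V} = 0$. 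Hence $W = V$.

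The heart of the proof is to promote irreducibility to $\mathfrak{g} = \mathfrak{su}(V)$. My plan is an explicit commutator computation: starting from unit vectors $a \in A$, $b \in B$ with $\langle a|b\rangle \neq 0$ and auxiliary vectors $a' \in A$ orthogonal to $a$ and $b' \in B$ orthogonal to $b$ (available because $\dim A, \dim B \geq 2$), the commutator $[|a\rangle\langle a'|-|a'\rangle\langle a|,\, |b\rangle\langle b'|-|b'\rangle\langle b|] \in \mathfrak{g}$ is a specific linear combination of $|a\rangle\langle b|$, $|b\rangle\langle a|$, $|a'\rangle\langle b'|$, $|b'\rangle\langle a'|$ with coefficients involving $\langle a|b\rangle$ and $\langle a'|b'\rangle$. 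Conjugating by elements of $\SU(A) \cup \SU(B) \subseteq \mathfrak{g}$, whose adjoint action is transitive on unit vectors of $A$ and $B$ respectively, one isolates each rank-one skew-Hermitian generator $|v\rangle\langle w|-|w\rangle\langle v|$ and $i(|v\rangle\langle w|+|w\rangle\langle v|)$ for $v \in A$, $w \in B$; together with $\mathfrak{su}(A) \cup \mathfrak{su}(B)$ these span $\mathfrak{su}(V)$. The main obstacle is the bookkeeping of this step, particularly in the delicate case $\dim A = 2$ or $\dim B = 2$; a cleaner alternative is to invoke the classification of irreducible reductive subalgebras of $\mathfrak{sl}(V)$, where every proper candidate (stabilizer of a bilinear form or tensor decomposition) is excluded because any $\mathfrak{g}$-invariant bilinear form on $V$ would, by the isotypic analysis of the irreducibility step, be supported on $(A+B)^{\perp_V} = 0$.
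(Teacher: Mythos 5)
The paper cites this lemma from \cite{aharonov9906129fault} without giving a proof, so there is no in-paper argument to compare against. Your step 1 (reduction to the Lie algebra $\mathfrak{g}$ generated by $\mathfrak{su}(A)$ and $\mathfrak{su}(B)$) and step 2 (irreducibility of $\mathfrak{g}$ on $V=A+B$ via isotypic components) are correct, under the necessary implicit hypothesis $\dim A, \dim B \geq 2$.

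Step 3 has a genuine gap, and it sits exactly in the case you yourself flag as ``delicate.'' Your cleaner alternative claims that every $\mathfrak{g}$-invariant bilinear form on $V$ must vanish by the isotypic analysis. That analysis kills the $A\times A^{\perp_V}$ block, and when $\dim A \geq 3$ it also kills the $A\times A$ block (there is no $\mathfrak{su}(k)$-invariant bilinear form on $\mathbb{C}^k$ for $k\geq 3$), so $A$ lies in the radical; combined with the same for $B$ this forces the form to vanish. But when $\dim A=2$ the standard representation of $\mathfrak{su}(2)\cong\mathfrak{sp}(1)$ \emph{does} admit an invariant symplectic form on $A$, and the argument collapses. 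The paper applies the lemma with $\dim B=2$ always, and with $\dim A=2$ as well in the base case, so this is not a corner case one can brush aside; it is the case that matters.

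In fact the gap cannot be repaired, because the lemma is false as stated. In $\mathbb{C}^4$ with the standard Hermitian form, let $\omega$ be the alternating bilinear form with $\omega(e_1,e_2)=\omega(e_3,e_4)=1$ and $\omega(e_i,e_j)=0$ for the other pairs $i<j$. Set $A=\mathrm{span}(e_1,e_2)$ and $B=\mathrm{span}(e_1+e_3,\,e_2+e_4)$. Then $A+B=\mathbb{C}^4$ and $\langle e_1, e_1+e_3\rangle=1\neq 0$, so the hypotheses hold; but one checks directly that the Hermitian and $\omega$-orthogonal complements of $A$ coincide (both equal $\mathrm{span}(e_3,e_4)$), and likewise for $B$ (both equal $\mathrm{span}(e_1-e_3,\,e_2-e_4)$). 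It follows that every element of $\SU(A)$ and of $\SU(B)$ preserves $\omega$, so both groups sit inside the compact symplectic group $\mathrm{USp}(4)$, a proper closed subgroup of $\SU(4)$. The group they generate is therefore not dense. Any correct version of the lemma needs an additional hypothesis ruling out such a shared compatible symplectic structure---for instance $A\cap B\neq 0$ together with a dimension condition---and the proof must use it. Before relying on the lemma you should check both the precise statement in \cite{aharonov9906129fault} and whether the paper's application (especially the base case $A=S_0$, $B=S_1$, where $S_0\cap S_1=0$) satisfies those extra hypotheses.
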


\begin{lem} \label{1-qudit universal}
The generalized Hadamard gate $H_p$ and the gate set $\{Q[i]_p, i = 0, \cdots, p-1\}$ generate a dense subgroup of $\SU(p).$

\begin{proof}
By Lemma \ref{XY[i]_d} and Lemma \ref{non-commuting SU2}, $X[i]$ and $Y[i]$ generate a dense subgroup of $\SU(S_i).$ It's easy to see that $S_i$ is not orthogonal to $\sum\limits_{j=0}^{i-1} S_j,$ and $\sum\limits_{j=0}^{p-1} S_j = \mathbb{C}^p.$ By induction on $i$ and by Lemma \ref{non-orthogonal}, we obtain a dense subgroup of $\SU(p).$
\end{proof}
\end{lem}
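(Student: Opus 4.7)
The plan is to assemble the three preceding lemmas (\ref{XY[i]_d}, \ref{non-commuting SU2}, \ref{non-orthogonal}) into an inductive argument. The heuristic is that commutators of $H_p$ with $Q[i]_p$ give me a handful of two-dimensional ``rotations'' that each live inside a concrete 2-plane $S_i$, and I just need to glue these 2-plane rotations together to cover all of $\SU(p)$.

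First I would note that $X[i]$ and $Y[i]$ are both words in $H_p$ and $Q[i]_p$, so anything they generate automatically sits in the candidate subgroup. Lemma \ref{XY[i]_d} tells me that both matrices act trivially on $S_i^{\bot}$, have infinite order when restricted to $S_i$, and fail to commute. Viewing them as elements of $\SU(S_i)\cong \SU(2)$, Lemma \ref{non-commuting SU2} then yields that $\langle X[i], Y[i]\rangle$ is dense in $\SU(S_i)$, regarded as the subgroup of $\SU(p)$ fixing $S_i^{\bot}$ pointwise.

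Next I would use Lemma \ref{non-orthogonal} to combine these copies of $\SU(2)$. Let $T_i := \sum_{j\le i} S_j$. Two observations make the induction run: (a) each $S_i$ contains the standard basis vector $|i\rangle$, so $T_{p-1}=\mathbb{C}^p$; and (b) for $i\ge 1$, the second generator $\sum_{j\ne i}\omega_p^{ij}|j\rangle$ of $S_i$ has nonzero components along every $|j\rangle$ with $j<i$, so $S_i$ is not orthogonal to $T_{i-1}$. I then induct on $i$: the case $i=0$ is handled by the previous paragraph, and given a dense subgroup of $\SU(T_{i-1})$ in hand, adjoining the dense subgroup of $\SU(S_i)$ and applying Lemma \ref{non-orthogonal} (with $A=T_{i-1}$, $B=S_i$) produces a dense subgroup of $\SU(T_{i-1}+S_i)=\SU(T_i)$. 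At $i=p-1$ this is dense in $\SU(p)$.

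The main subtlety, such as it is, lies in checking (b) carefully: one wants the overlap of $S_i$ with $T_{i-1}$ to be nontrivial for every $i$, which is clear because the vector $\sum_{j\ne i}\omega_p^{ij}|j\rangle$ has $p-1\ge 4$ nonzero coordinates and therefore cannot be supported only on indices $\ge i$. A secondary point worth noting is that Lemma \ref{non-orthogonal} is stated for $\SU$ rather than $U$, so I must work inside $\SU(p)$ throughout, which is fine because all the generating commutators $X[i], Y[i]$ automatically have determinant $1$. Once these items are in place the proof is essentially mechanical.
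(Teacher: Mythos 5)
Your proof is correct and follows essentially the same approach as the paper: combine Lemma \ref{XY[i]_d} with Lemma \ref{non-commuting SU2} to get density in each $\SU(S_i)$, then use Lemma \ref{non-orthogonal} inductively to glue these together via the subspaces $T_i=\sum_{j\le i}S_j$. The only difference is that you spell out the paper's ``it's easy to see'' claims (that $S_i$ meets $T_{i-1}$ non-orthogonally and that $T_{p-1}=\mathbb{C}^p$) and note explicitly that the commutators $X[i],Y[i]$ land in $\SU$, all of which are correct but routine verifications.
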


\section{Universal Models from Metaplectic Anyon Systems}\label{background}

We will follow the set-up of anyonic quantum computing models as in \cite{Wang, cui2014universal}, in particular the notations in Section $2$ of \cite{cui2014universal}.  We refer to a particular anyonic model by a pair $(V,b)$, where $V$ is the fusion space that encodes one qudit and $b$ a basis of V designated as the computational basis.  This notation is not complete because we also need to specify the encoding of two qudits, which will be clear from the context.

\subsection{The metaplectic anyon system $\SO(p)_2$}

For a detailed discussion of $\SO(p)_2$, see \cite{hastings2013metaplectic}. The UMC $\SO(p)_2$ for an odd prime $p = 2r+1\geq 5$ has $r+4$ isomorphism classes of simple objects (also called anyon types). We denote the set of simple object representatives by $\{\mathbf{1}, Z, X_{\epsilon}, X_{\epsilon}', Y_j, 1 \leq j \leq r\}$, and their types $\{1,z,\epsilon, \epsilon', y_j, 1 \leq j \leq r\}$.   Their quantum dimensions are $d_{\mathbf{1}} = d_Z = 1, \, d_{X_{\epsilon}} = d_{X_{\epsilon'}} = \sqrt{p}, \, d_{Y_j} = 2.$ We will follow \cite{HNW13} to refer to the anyons $X_{\epsilon}, X_{\epsilon}'$ as the {\it metaplectic anyons}, and the anyons $\{Y_j, 1 \leq j \leq r\}$ as the {\it metaplectic modes}.

The following is a list of some of the fusion rules which are suffcient for deducing all the other fusion rules.
$
\\
1).\, X_{\epsilon} \otimes X_{\epsilon} \cong \mathbf{1} \oplus \oplus_{j=1}^{r} Y_j
\\
2).\, X_{\epsilon} \otimes X_{\epsilon}' \cong Z \oplus \oplus_{j=1}^{r} Y_j
\\
3).\, X_{\epsilon} \otimes Y_j \cong X_{\epsilon} \oplus X_{\epsilon}', \, 1 \leq j \leq r
\\
4).\, X_{\epsilon} \otimes Z \cong X_{\epsilon}'
\\
5).\, Z \otimes Z \cong \mathbf{1}
\\
6).\, Z \otimes Y_j \cong Y_j, \, 1 \leq j \leq r
\\
7).\, Y_j \otimes Y_j \cong \mathbf{1} \oplus Z \oplus Y_{min\{2j, m-2j\}}, \, 1 \leq j \leq r
\\
8).\, Y_i \otimes Y_j \cong Y_{|i-j|} \oplus Y_{min\{i+j, m-i-j\}}, \, 1 \leq i,j \leq r, i \neq j
$
\vspace{.1in}

The UMC $\SO(3)_2$ is the same as $\SU(2)_4$.  There are five anyon types in $\SO(3)_2, $ namely, $1, z, {\epsilon}, {\epsilon}', y.$ Their quantum dimensions are $1,\; 1, \; \sqrt{3},\; \sqrt{3},\; 2.$

\begin{rem}
 The anyon types in $\SU(2)_4$ are usually denoted as $\{0,\;1,\;2,\;3,\;4\}$, which are twice the spin of the corresponding irreps of $\SU(2)$. The correspondence between $\SO(3)_2$ and $\SU(2)_4$ labels are given as follows:

${1} \leftrightarrow 0, \qquad z \leftrightarrow 4, \qquad {\epsilon} \leftrightarrow 1, \qquad {\epsilon}' \leftrightarrow 3, \qquad y \leftrightarrow 2, \qquad$
\end{rem}

We use the fusion tree shown in Figure \ref{1-qutrit pic} to encode a qutrit.

%
%

\begin{figure}
\includegraphics{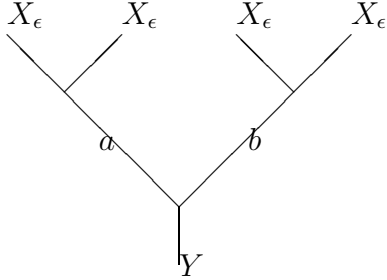}
\caption{1-qutrit model}\label{1-qutrit pic}
\end{figure}

The associated Hilbert space $V^{\epsilon\epsilon\epsilon\epsilon}_y$  is $3$-dimensional with the computational basis $\{-|YY\rangle, |\mathbf{1}Y\rangle, |Y \mathbf{1}\rangle\}.$  We will often use the type labels in fusion spaces.
The computational basis $\{-|YY\rangle, |\mathbf{1}Y\rangle, |Y \mathbf{1}\rangle\}$ is identified with the qutrit basis $\{|0\rangle, |1\rangle, |2\rangle\}, $ respectively. Note that this qutrit basis does not have a $\Z_3$ symmetry. We will denote this $\SU(2)_4$ anyonic computational model as $(V^{\epsilon\epsilon\epsilon\epsilon}_y, \{-|YY\rangle, |\mathbf{1}Y\rangle, |Y \mathbf{1}\rangle\})$ or simply $V^{\epsilon\epsilon\epsilon\epsilon}_y$.

\begin{rem}
The minus sign in front of $|YY\rangle$ is introduced to make the braid representation matrices into a nicer form. This is inessential since the gate $\Flip[0]$ that changes the sign of $|0\rangle$ can be constructed.
\end{rem}

The data needed to analyze the computational power of our model are the $F$-matrices and $R$-symbols. All the $F$-matrices and $R$-symbols for $\SU(2)_4$ are listed in Appendix \ref{6jSO(3)}. Partial data enough for our purpose for $\SO(5)_2$ are listed in Appendix \ref{6jSO(5)}. See Appendix \ref{braid matrix} on how braid matrices for $\sigma_1,\sigma_2, \sigma_3$ in the following subsections are derived.

\subsection{The universal model $V^{\epsilon\epsilon\epsilon\epsilon}_y$ with $\SU(2)_4$} \label{SO(3)}

Under the basis $ \\ \{-|YY\rangle, |\mathbf{1}Y\rangle, |Y \mathbf{1}\rangle\}$, the generators of the braid group $\B_4$ for the representation $V^{\epsilon\epsilon\epsilon\epsilon}_y$ have the following matrices:

$\sigma_1 = \gamma
\begin{pmatrix}
1 &  0  &  0 \\
0 &  \omega &  0 \\
0 &  0  &  1 \\
\end{pmatrix}
\qquad
\sigma_3 =\gamma
\begin{pmatrix}
1 &  0  &  0 \\
0 &  1 &  0 \\
0 &  0  &  \omega \\
\end{pmatrix}
$

$\sigma_2 =\gamma
\begin{pmatrix}
\frac{1}{2} + \frac{\sqrt{3}i}{6} &  -\frac{1}{2} + \frac{\sqrt{3}i}{6}  &  -\frac{1}{2} + \frac{\sqrt{3}i}{6} \\
-\frac{1}{2} + \frac{\sqrt{3}i}{6} & \frac{1}{2} + \frac{\sqrt{3}i}{6}   &  -\frac{1}{2} + \frac{\sqrt{3}i}{6} \\
-\frac{1}{2} + \frac{\sqrt{3}i}{6} &  -\frac{1}{2} + \frac{\sqrt{3}i}{6}  &  \frac{1}{2} + \frac{\sqrt{3}i}{6} \\
\end{pmatrix},
$

where $\gamma = e^{\frac{\pi i}{12}}.$

Note that $\sigma_1,\; \sigma_3$ are just $Q[1]_3,\; Q[2]_3$ defined in Section \ref{univeral gate set}, up to a phase.

The group generated by these matrices is a subgroup of $\SU(3)$ of order $648$ whose center is isomorphic to $\mathbb{Z}_3$. It is isomorphic to the complex reflection group which is the $25$-th item in the classification table of finite complex reflection groups in \cite{shephard1954finite}.  The elements in the center are scalar matrices. And the group modulo the center is isomorphic to the famous Hessian group $\sum(216)$ of order $216$, which is also the $1$-qutrit Clifford group \cite{Jordan} \cite{miller2012theory} \cite{fairbairn1964finite}.  In the following, we will choose many braids whose representation matrices provide us desired gates.  They are obtained by systematically analyzing the representation $V^{\epsilon\epsilon\epsilon\epsilon}_y$ of $\B_4$.

Define $p$ = $\sigma_1\sigma_2\sigma_1$, $q$ = $\sigma_3\sigma_2\sigma_3,$ $H$ = $q^2 p q^2$, then $($ignoring the phase $\gamma$ $),$

$p^2 =-
\begin{pmatrix}
0 &  1  &  0 \\
1 &  0 &  0 \\
0 &  0  &  1 \\
\end{pmatrix}
\qquad
q^2 =-
\begin{pmatrix}
0 &  0  &  1 \\
0 &  1 &  0 \\
1 &  0  &  0 \\
\end{pmatrix}
$

$
H = \frac{1}{\sqrt{3}i}
\begin{pmatrix}
1 &  1  &  1 \\
1 &  \omega & \omega^2 \\
1 &  \omega^2  &  \omega \\
\end{pmatrix}
$

Thus, by braiding alone we obtained all the $1$-qutrit classical gates, the generalized Hadamard gate and the gates $Q[i]_3,\; i = 0, 1 , 2$.

Next, we consider the encoding of the $2$-qutrits using the $9$ dimensional subspace $V^{\epsilon\epsilon\epsilon\epsilon}_y \bigotimes V^{\epsilon\epsilon\epsilon\epsilon}_y \subset V^{\epsilon\epsilon\epsilon\epsilon\epsilon\epsilon\epsilon\epsilon}_y$. See Figure \ref{2-qutrit pic}.

\begin{figure}
\includegraphics{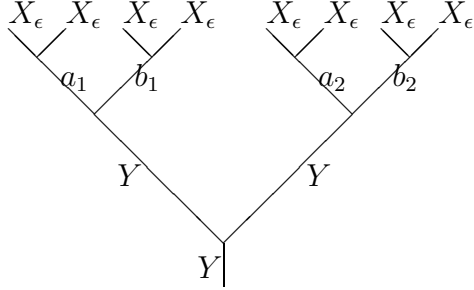}
\caption{2-qutrit model}\label{2-qutrit pic}
\end{figure}

%
%
%
%
%
%

Let $s_1 = \sigma_2\sigma_1\sigma_3\sigma_2,\ \ s_2 = \sigma_4\sigma_3\sigma_5\sigma_4,\ s_3 = \sigma_6\sigma_5\sigma_7\sigma_6. \ $ And let $\bigwedge(Z) = s_1^{-1}s_2^2s_1 s_3^{-1}s_2^2 s_3.$

It can be verified that $\bigwedge(Z),$ restricted to the $9$-dimensional subspace $V^{\epsilon\epsilon\epsilon\epsilon}_y \bigotimes V^{\epsilon\epsilon\epsilon\epsilon}_y \subset V^{\epsilon\epsilon\epsilon\epsilon\epsilon\epsilon\epsilon\epsilon}_y,$ is exactly the Controlled-$Z$ gate when $\{-|YY\rangle, |\mathbf{1}Y\rangle, |Y \mathbf{1}\rangle\}$ is the computational basis for each qutrit $V^{\epsilon\epsilon\epsilon\epsilon}_y.$  By drawing the braids $s_i, i=1,2,3$, it is not hard to be convinced that this Controlled-$Z$ gate has no leakage.  Therefore, our anyonic model is leakage-free.

Recall the definitions from the beginning of Section \ref{univeral gate set}, we see that the $\SUM$ gate is related to the Controlled-$Z$ gate through  the generalized Hadamard gate $H$. Explicitly, we have
$$\SUM = (Id \otimes H)\bigwedge(Z)^{-1}(Id \otimes H^{-1}).$$

Thus, we can obtain the $\SUM$ gate by braiding, since $H$ is already a braiding circuit. We have:

\begin{prop}\label{propbraiding}
By braiding alone, we can construct the classical $1$-qutrit gates, the generalized Hadamard gate, the generalized $\sigma_z$ gates $Q[i]$, and the $\SUM$ gate for our anyonic model $V^{\epsilon\epsilon\epsilon\epsilon}_y$.
\end{prop}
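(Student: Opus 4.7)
The plan is to verify, one gate at a time, that each claimed construction in the proposition reduces to an explicit matrix identity in the braid group representation on $V^{\epsilon\epsilon\epsilon\epsilon}_y$ or on the $12$-anyon fusion space $V^{\epsilon\cdots\epsilon}_y$ that hosts the $2$-qutrit encoding. Everything required is available: the $3\times 3$ matrices for $\sigma_1,\sigma_2,\sigma_3$ are already displayed in the preceding paragraphs, and the $F$- and $R$-data needed to compute the action of $\sigma_1,\dots,\sigma_7$ on the larger fusion space sit in Appendix \ref{6jSO(3)}, with the algorithm for assembling braid matrices spelled out in Appendix \ref{braid matrix}.

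The $1$-qutrit half comes essentially for free from the matrices already written down. First I would multiply out $p^2 = (\sigma_1\sigma_2\sigma_1)^2$ and $q^2 = (\sigma_3\sigma_2\sigma_3)^2$ to recover the two transpositions displayed; since these generate the full symmetric group on $\{|0\rangle,|1\rangle,|Y\mathbf{1}\rangle\}$, every classical $1$-qutrit gate is realized. A second direct multiplication gives $H = q^2 p q^2$ equal to the displayed Hadamard matrix up to an overall scalar, which I would absorb into the global phase (alternatively show directly that $H$ coincides with $H_3$ on the projective level sufficient for quantum computation). For the diagonal gates I would note that $\sigma_1$ and $\sigma_3$ are already $Q[1]_3$ and $Q[2]_3$ up to the common phase $\gamma$, and use the identity $\prod_{i=0}^{2} Q[i]_3 = \omega\,\mathrm{Id}$ mentioned in the remark after Theorem \ref{qudit thm} to obtain $Q[0]_3$ from the other two; any residual global phase can be removed by a classical gate constructed as above (or simply ignored).

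The substantive step is the $2$-qutrit statement. Here I would apply the recipe of Appendix \ref{braid matrix} to write each of $\sigma_1,\dots,\sigma_7$ as a matrix on $V^{\epsilon\epsilon\epsilon\epsilon\epsilon\epsilon\epsilon\epsilon}_y$, organized so that the $9$-dimensional subspace $V^{\epsilon\epsilon\epsilon\epsilon}_y\otimes V^{\epsilon\epsilon\epsilon\epsilon}_y$ from Figure \ref{2-qutrit pic} sits inside as a block labelled by the middle fusion channel. Then I would compute, separately for each of $s_1 = \sigma_2\sigma_1\sigma_3\sigma_2$, $s_2 = \sigma_4\sigma_3\sigma_5\sigma_4$, $s_3 = \sigma_6\sigma_5\sigma_7\sigma_6$, the matrix of the word $s_1^{-1} s_2^2 s_1 s_3^{-1} s_2^2 s_3$ and check two things: (i) it preserves the $9$-dimensional subspace, so the construction is leakage-free, and (ii) its restriction to that subspace, in the basis $\{-|YY\rangle,|\mathbf{1}Y\rangle,|Y\mathbf{1}\rangle\}^{\otimes 2}$, equals $\bigwedge(Z)_3$. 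A useful intermediate check is that the two squares $s_2^2$ act as transpositions in the middle fusion label (analogous to $p^2,q^2$), so that the conjugations by $s_1$ and $s_3$ only affect the left and right qutrit respectively, which explains why the composite word acts diagonally with the correct phase profile $\omega^{ij}$.

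The main obstacle is this last computation: the intermediate matrices live in a fusion space of dimension larger than nine, and the bookkeeping through the $F$-moves and $R$-symbols of Appendix \ref{6jSO(3)} is tedious and error-prone, so I would organize it by computing braid matrices once in a canonical fusion tree and then conjugating by a single change-of-basis $F$-matrix per $s_i$. Once $\bigwedge(Z)$ is in hand, the $\SUM$ gate follows immediately from the displayed identity
\[
\SUM = (\mathrm{Id}\otimes H)\,\bigwedge(Z)^{-1}\,(\mathrm{Id}\otimes H^{-1}),
\]
because $H$ has already been realized as a pure braid on the second qutrit and braids on the two qutrits can be performed in parallel (they act on disjoint strands of the same $\B_8$). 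Assembling the pieces yields all four families of gates from braiding alone, completing the proof.
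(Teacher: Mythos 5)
Your proposal is correct and follows essentially the same route as the paper: observe that $\sigma_1,\sigma_3$ realize $Q[1]_3,Q[2]_3$ up to phase, compute $p^2,q^2,H$ to get the classical gates and Hadamard, realize $\bigwedge(Z)$ via the braid word $s_1^{-1}s_2^2 s_1 s_3^{-1}s_2^2 s_3$ on the $8$-strand fusion space, and conjugate by $\mathrm{Id}\otimes H$ to obtain $\SUM$. The only slips are cosmetic (writing $|Y\mathbf{1}\rangle$ in place of $|2\rangle$), and your intermediate remark about $s_2^2$ acting as a transposition on the middle fusion label is a reasonable heuristic but the paper's own verification is a direct matrix check together with a pictorial argument for leakage-freedom.
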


It follows from Theorem \ref{qutrit thm} that we need to find the square roots of $Q[i]$ to make our model universal.  Our solution is to introduce a physically realistic measurement: to determine wether or not the total charge of two anyons is trivial.

\begin{measurement}\label{measure 1}
Let $\M_{\mathbf{1}}=\{\Pi_\mathbf{1},\Pi_\mathbf{1}'\}$ be the projective measurement onto the total charge=$\mathbf{1}$ sector of {\it two} anyons and its complement. Then $\M_\mathbf{1}$ allows us to distinguish between the trivial anyon $\mathbf{1}$ and other anyons; namely, check whether an anyon is trivial or not. Moreover, in a $1$-qutrit model, the state of the second pair after each outcome of the measurement of the first pair is still coherent.
\end{measurement}

Applying Measurement \ref{measure 1} to the first two anyons in the $1$-qutrit model to determine their total charge is equivalent to projecting the state to the subspace $span\{|\mathbf{1}Y\rangle\}$ and its orthogonal complement $span\{-|YY\rangle, |Y \mathbf{1}\rangle\}$.
Since all the $1$-qutrit classical gates can be constructed by braiding, we can also project the state to $span\{-|YY\rangle\}$ and $span\{|\mathbf{1}Y\rangle, |Y \mathbf{1}\rangle\}$. Thus, Measurement \ref{measure 0 12} can be obtained from Measurement \ref{measure 1} and braiding.  It is important to notice that when the total charge of the first two anyons of a qutrit is $Y$, then the total charge of the second pair of anyons is in a coherent superposition of $\mathbf{1}$ and $Y$.

Another method to measure total charge of anyons is interferometric measurement.  It is known that any projective measurement of total charge of anyons can be simulated by interferometric measurements \cite{FL15}.

Therefore, by braiding anyons and Measurement \ref{measure 1}, we can construct the generalized Hadamard gate $H$, the $\SUM$ gate, all the $Q[i]\;'$s, all the $1$-qutrit classical gates and Measuremnt \ref{measure 0 12}. These are exactly the universal gate set in Theorem \ref{qutrit thm2}.

\begin{thm}
In the $\SU(2)_4$ theory, if we use the fusion space of four metaplectic anyons $X_{\epsilon}$ with total charge $Y$ as a $1$-qutrit $($See \ref{1-qutrit pic}$)$, and choose $\{-|YY\rangle, |\mathbf{1}Y\rangle, |Y \mathbf{1}\rangle\}$ as the computational basis, then braiding supplemented with Measurement \ref{measure 1} of two metaplectic anyons forms a universal gate set for our anyonic quantum computation.
\begin{proof}
It follows from Prop. \ref{propbraiding} and Theorem \ref{qutrit thm2}.
\end{proof}
\end{thm}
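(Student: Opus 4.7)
The plan is to verify directly that the resources available—braidings of the four metaplectic anyons together with Measurement \ref{measure 1} on pairs of anyons—realize each of the five items in Theorem \ref{qutrit thm2}, and then invoke that theorem.

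First, I would lean entirely on Prop. \ref{propbraiding} to cover items 1), 2), 3), and 4) of Theorem \ref{qutrit thm2}: the generalized Hadamard $H_3$, the $\SUM_3$ gate, all $Q[i]_3$, and the full group of $1$-qutrit classical gates are already exhibited there as explicit braid words in $\sigma_1,\ldots,\sigma_3$. So the only remaining task is to realize Measurement \ref{measure 0 12}, i.e.\ the coherent projection onto $\mathrm{span}\{|0\rangle\}$ vs.\ $\mathrm{span}\{|1\rangle,|2\rangle\}$ in the computational basis $\{-|YY\rangle,|\mathbf{1}Y\rangle,|Y\mathbf{1}\rangle\}$.

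Second, I would argue that Measurement \ref{measure 1} applied to the first pair of metaplectic anyons in the fusion tree of Figure \ref{1-qutrit pic} is exactly the projection onto $\mathrm{span}\{|\mathbf{1}Y\rangle\}$ vs.\ its orthogonal complement $\mathrm{span}\{-|YY\rangle,|Y\mathbf{1}\rangle\}$, because the label on the first internal edge of the tree is precisely the total charge of those two anyons, and in our basis that label is $\mathbf{1}$ only in the state $|\mathbf{1}Y\rangle$. Combining this with the classical $1$-qutrit permutation gates (already available by braiding) lets me permute which basis vector is singled out by the projection; in particular I can turn it into the projection onto $\mathrm{span}\{-|YY\rangle\}$ vs.\ $\mathrm{span}\{|\mathbf{1}Y\rangle,|Y\mathbf{1}\rangle\}$, which is exactly Measurement \ref{measure 0 12}. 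The one point I would make explicit—this is the subtle part—is that coherence is preserved on the complementary $2$-dimensional subspace: when the outcome is "total charge $\neq\mathbf{1}$", the two remaining basis states both have total charge $Y$ for the first pair (and correspondingly fixed second-pair charges $\mathbf{1}$ or $Y$ determined by the fusion rules), so the measurement does not further distinguish them and the superposition survives. This is the coherence guarantee built into Measurement \ref{measure 1}.

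Third, with $H_3$, $\SUM_3$, some $Q[i]_3$, a nontrivial $1$-qutrit classical gate different from $H_3^2$ (e.g.\ the generalized $X$-gate $X_3 = H_3^{-1}Z_3H_3$, which is one of the classical gates delivered by Prop. \ref{propbraiding}), and Measurement \ref{measure 0 12} all in hand, Theorem \ref{qutrit thm2} immediately produces the sign-flip $\Flip[i]_3$, hence $P[i]_3 = Q[i]_3\Flip[i]_3$, hence by Theorem \ref{qutrit thm} a universal qutrit gate set.

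The main conceptual obstacle, and really the only non-routine step, is the coherence claim for Measurement \ref{measure 1} on the first pair of anyons inside a qutrit: one must be sure that the measurement operator in the anyonic Hilbert space really acts as the rank-$1$ projector onto $|\mathbf{1}Y\rangle$ and not as a finer decomposition of the $Y$-outcome sector. This is guaranteed by the fact that the second internal edge label in the standard fusion tree is fully determined by the first and by the global charge $Y$ together with the fusion rule $X_\epsilon\otimes X_\epsilon \cong \mathbf{1}\oplus Y$, so no additional superselection sector is introduced by the measurement.
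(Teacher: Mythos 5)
Your overall argument matches the paper's exactly: Proposition~\ref{propbraiding} supplies items 1)--4) of Theorem~\ref{qutrit thm2}, Measurement~\ref{measure 1} on the first pair of anyons together with the classical permutation gates from braiding gives Measurement~\ref{measure 0 12}, and then Theorem~\ref{qutrit thm2} concludes universality.

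However, your closing paragraph contains a factual error in the justification of the coherence claim. You assert that ``the second internal edge label in the standard fusion tree is fully determined by the first and by the global charge $Y$ together with the fusion rule $X_\epsilon\otimes X_\epsilon\cong\mathbf{1}\oplus Y$, so no additional superselection sector is introduced.'' This is not true: when the first pair fuses to $Y$ and the total charge is $Y$, the second pair may fuse to either $\mathbf{1}$ (since $Y\otimes\mathbf{1}=Y$) or $Y$ (since $Y\otimes Y\ni Y$). That is precisely why the complement of $|\mathbf{1}Y\rangle$ is the \emph{two}-dimensional span of $-|YY\rangle$ and $|Y\mathbf{1}\rangle$, with two distinct second-pair labels. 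If the second label were determined, the complement would be one-dimensional and coherence would be vacuous; the nontrivial content is that the superposition over the second pair's charge survives the measurement of the first pair's charge. This is not derived by counting fusion channels---it is built into the definition of Measurement~\ref{measure 1}, which is explicitly stipulated to measure only the total charge of the designated pair and to leave the rest of the state coherent. Your step is therefore correct as a whole, but the supporting reasoning in the last paragraph should be replaced by an appeal to the stated coherence guarantee of Measurement~\ref{measure 1} rather than to a (false) claim that the fusion tree forces a unique second label.
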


The most challenging part of our model is to maintain the total charge of many metaplectic anyons to be the meteplectic mode $Y$.  We cannot make the more natural choice of total charge trivial model universal, and provide evidence below that this cannot be done.

\subsection{A model from $\SO(5)_2$} \label{SO(5)}

The $\SO(5)_2$ theory consists of six anyon types: $\{{1}, z, y_1, y_2, {\epsilon}, {\epsilon}'\}.$ We set up a similar model as that for $\SO(3)_2$.

For a $1$-qupit $p=5$, use the model as shown in Figure \ref{1-qupit pic}.

\begin{figure}
\includegraphics{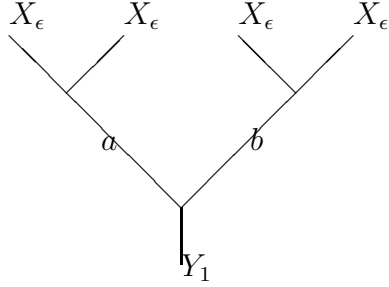}
\caption{1-qupit model}\label{1-qupit pic}
\end{figure}
%
%

The Hilbert space $V^{\epsilon\epsilon\epsilon\epsilon}_{y_1}$ now is $5$-dimensional with the computational basis $\{|Y_2Y_2\rangle, |\mathbf{1}Y_1\rangle, |Y_2Y_1\rangle, |Y_1Y_2\rangle, |Y_1\mathbf{1}\rangle\}$. The representation matrices of the generators of $\B_4$ are:

$\sigma_1 = \frac{1}{i} \left(
\begin{array}{ccccc}
 e^{\frac{2 i \pi }{5}} & 0 & 0 & 0 & 0 \\
 0 & 1 & 0 & 0 & 0 \\
 0 & 0 & e^{\frac{2 i \pi }{5}} & 0 & 0 \\
 0 & 0 & 0 & e^{-\frac{2 i \pi }{5}} & 0 \\
 0 & 0 & 0 & 0 & e^{-\frac{2 i \pi }{5}} \\
\end{array}
\right)
$

$\sigma_2 = \frac{1}{\sqrt{5}i} \left(
\begin{array}{ccccc}
 1 & e^{-\frac{2 i \pi }{5}} & e^{\frac{2 i \pi }{5}} & e^{\frac{2 i \pi }{5}} & e^{-\frac{2 i \pi }{5}} \\
 e^{-\frac{2 i \pi }{5}} & 1 & e^{-\frac{2 i \pi }{5}} & e^{\frac{2 i \pi }{5}} & e^{\frac{2 i \pi }{5}} \\
 e^{\frac{2 i \pi }{5}} & e^{-\frac{2 i \pi }{5}} & 1 & e^{-\frac{2 i \pi }{5}} & e^{\frac{2 i \pi }{5}} \\
 e^{\frac{2 i \pi }{5}} & e^{\frac{2 i \pi }{5}} & e^{-\frac{2 i \pi }{5}} & 1 & e^{-\frac{2 i \pi }{5}} \\
 e^{-\frac{2 i \pi }{5}} & e^{\frac{2 i \pi }{5}} & e^{\frac{2 i \pi }{5}} & e^{-\frac{2 i \pi }{5}} & 1 \\
\end{array}
\right)
$

$\sigma_3 = \frac{1}{i}\left(
\begin{array}{ccccc}
 e^{\frac{2 i \pi }{5}} & 0 & 0 & 0 & 0 \\
 0 & e^{-\frac{2 i \pi }{5}} & 0 & 0 & 0 \\
 0 & 0 & e^{-\frac{2 i \pi }{5}} & 0 & 0 \\
 0 & 0 & 0 & e^{\frac{2 i \pi }{5}} & 0 \\
 0 & 0 & 0 & 0 & 1 \\
\end{array}
\right)
$

The representation $V^{\epsilon\epsilon\epsilon\epsilon}_{y_1}$ of the braid group $\B_4$ is irreducible and the image is the Clifford group, which is isomorphic to $(\mathbb{Z}_5 \times \mathbb{Z}_5) \rtimes \mathrm{SL(2, \mathbb{Z}_5)}$.

Direct calculations lead to the following important gates, up to a phase, from braiding.

The generalized Hadamard gate $H_5:$ $H_5 |j\rangle = \sum\limits_{i=0}^{4} \omega_5^{ij}|i\rangle,\,= \sigma_1^{-1}\sigma_3^{-1}\sigma_2^2\sigma_1^{-1}\sigma_3^{-1}$.

The generalized $Z$-gate $Z:$ $Z|i\rangle = \omega_5^i |i\rangle$ = $\sigma_1\sigma_3^{-1}.$

The generalized $X$-gate $X:$ $X|i\rangle = |i+1\rangle$ = $\sigma_1\sigma_2\sigma_1^{-2}\sigma_3^2(\sigma_1\sigma_2)^{-1}.$

The multiplication gate $M[k]:$ $M[k] |i\rangle = |ki\rangle, \, k = 2,3,4$. These gates are realized by $\sigma_1^2\sigma_2^{-2}\sigma_1^{-1}\sigma_3^{-1}\sigma_2\sigma_1, \sigma_1^2\sigma_2^{-1}\sigma_1\sigma_3\sigma_2^2\sigma_3,$ and $\sigma_1\sigma_2\sigma_1\sigma_3\sigma_2\sigma_1.$

$\\$

The gates $X$ and $M[k]$ are classical $1$-qupit gates, i.e., permutation matrices. Identifying a permutation matrix with the permutation that it represents in the permutation group $\Sym_5$, we see that $X, M[k]$ generate a maximal subgroup of $\Sym_5$ with order $20$, which is isomorphic to $\mathbb{Z}_5 \rtimes \mathbb{Z}_4$. Moreover, this maximal subgroup contains all the $1$-qupit classical gates obtained from braiding. Since this subgroup is maximal, any classical gate out of the subgroup is enough to produce all the $1$-qupit classical gates.
We speculate that Measurement \ref{measure 1} would help produce an extra classical gate.

The $2$-qupit encoding is also analogous to the one we used in $\SO(3)_2$. See Figure \ref{2-qupit pic}.

\begin{figure}
\includegraphics{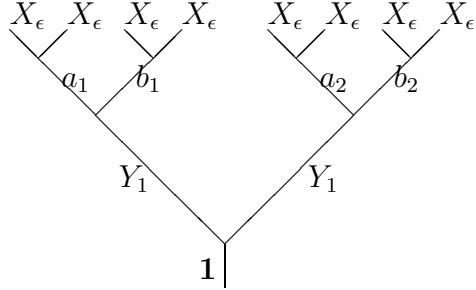}
\caption{2-qupit model}\label{2-qupit pic}
\end{figure}
%
%
%
%
%
%

We obtain the Controlled-$Z$ gate $\bigwedge(Z)$ by the same braiding as we did in Subsection \ref{SO(3)}. Here $\bigwedge(Z)|i,j\rangle = \omega_5^{ij}|i,j\rangle.$ And again, the $\SUM$ gate is obtained by conjugating $\bigwedge(Z)$ by the generalized Hadamard $H$.

$\\$

\begin{prop}
The gates that can be constructed from braiding include the generalized Hadamard $H$, $20$ $1$-qupit classical gates generated by the generalized $X$ gate and multiplication gates $M[k]\;'$s, the generalized $Z$ gate and the $\SUM$ gate. In view of Corollary \ref{qudit cor}, we need an extra $1$-qupit classical gate and some gate $Q[i]$ to make this model universal.
\end{prop}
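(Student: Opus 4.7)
The plan splits into two claims plus an appeal to Corollary \ref{qudit cor}. The first claim, that $H$, $Z$, $X$, the multiplication gates $M[k]$ and the controlled-$Z$ gate $\bigwedge(Z)$ (and hence the $\SUM$ gate) are realized by the specific braid words recorded in Subsection \ref{SO(5)}, is a direct matrix multiplication: plug in the explicit $5\times 5$ matrices of $\sigma_1,\sigma_2,\sigma_3$ displayed above and check each word against its claimed action on the basis $\{|Y_2Y_2\rangle,|\mathbf{1}Y_1\rangle,|Y_2Y_1\rangle,|Y_1Y_2\rangle,|Y_1\mathbf{1}\rangle\}$, tolerating a global phase. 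For $\bigwedge(Z)$, the braid $s_1^{-1}s_2^{2}s_1 s_3^{-1}s_2^{2}s_3$ is evaluated on the $2$-qupit fusion tree of Figure \ref{2-qupit pic} using the $F$- and $R$-data of Appendix \ref{6jSO(5)}, exactly as in Subsection \ref{SO(3)}; leakage-freeness follows by drawing the $s_i$ and observing that the encoded subspace is preserved. The identity $\SUM=(\mathrm{Id}\otimes H)\bigwedge(Z)^{-1}(\mathrm{Id}\otimes H^{-1})$ then exhibits $\SUM$ as a braid word.

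The second claim is that $X$ and the $M[k]$ generate a subgroup of order $20$ that exhausts the classical ($\{0,1\}$-entry) gates produced by braiding. The subgroup is plainly the affine group $\mathrm{AGL}(1,\Z_5)\cong \Z_5\rtimes \Z_4$, acting by $|i\rangle\mapsto|ki+b\rangle$, which is a maximal subgroup of $\Sym_5$. To see that no additional classical gate lies in the image of $\B_4$, use the fact (recorded in Subsection \ref{SO(5)}) that this image is the single-qupit Clifford group $(\Z_5\times\Z_5)\rtimes \mathrm{SL}(2,\Z_5)$: the permutation matrices normalizing the Pauli group $\Z_5\times\Z_5$ are precisely those induced by affine maps on $\Z_5$, so every classical gate in the image lies in $\mathrm{AGL}(1,5)$.

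Both claims combined with Corollary \ref{qudit cor} yield the proposition. Universality requires $H_p$, $\SUM_p$, all $1$-qupit classical gates and some $Q[i]_p$; braiding supplies $H_5$, $\SUM_5$ and precisely the order-$20$ Frobenius subgroup of classical gates, and by maximality of $\mathrm{AGL}(1,5)$ in $\Sym_5$ a single extra classical gate outside it already completes $\Sym_5$. The main obstacle is the second claim, which requires bounding the classical part of the Clifford image from above: producing $\mathrm{AGL}(1,5)$ as braids is direct, but ruling out permutations of other cycle types requires the explicit description of monomial matrices in $(\Z_5\times\Z_5)\rtimes \mathrm{SL}(2,\Z_5)$ via their normalizing action on the Heisenberg--Weyl group.
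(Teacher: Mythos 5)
Your proposal is correct and follows essentially the same route as the paper's own discussion preceding the proposition: the gate identities for $H_5$, $Z$, $X$, $M[k]$ are verified by direct multiplication of the displayed $5\times5$ braid matrices, the $\bigwedge(Z)$ braid word is carried over verbatim from the $\SO(3)_2$ construction (with leakage-freeness read off from the $s_i$), and $\SUM$ is recovered by conjugating $\bigwedge(Z)^{-1}$ by $\mathrm{Id}\otimes H$. Where you go beyond the text is the upper-bound claim: the paper simply asserts that $\langle X, M[k]\rangle\cong\mathrm{AGL}(1,5)$ already contains every $1$-qupit classical gate obtainable by braiding, whereas you supply the missing justification—the image of $\B_4$ is the qupit Clifford group $(\Z_5\times\Z_5)\rtimes\mathrm{SL}(2,\Z_5)$, and a permutation matrix there must conjugate $X$ to some $X^a$ and $Z$ to a phase times $Z^{a^{-1}}$, forcing it to implement an affine map $i\mapsto ai+b$. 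This is the natural argument and is worth making explicit, but it completes the paper's reasoning rather than replacing it.
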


Suppose we have all the $1$-qupit classical gates, then clearly by using $\sigma_1$ and classical gates, we can obtain the gates $R[i,j,k] = (Q[i]Q[j]^{-1})^k, i \neq j, k= 1,2,3,4$. For example,

$R[1,2,k] =
\begin{pmatrix}
\omega^k & 0 & 0 & 0 & 0\\
0 & \omega^{-k} & 0 & 0 & 0\\
0 & 0 & 1 & 0 & 0\\
0 & 0 & 0 & 1 & 0\\
0 & 0 & 0 & 0 & 1\\

\end{pmatrix}
$

Let $X[i,j,k] = HR[i,j,k]H^{-1}R[i,j,k]^{-1},$.

It can be shown that $X[i,j,k]$ is of infinite order, and for fixed $i,j$, the four matrices $\{X[i,j,k], k= 1,2,3,4\}$ fix some $1$-dimensional subspace and act irreducibly on the $4$-dimensional orthogonal complement. For example, $\{X[1,2,k], k= 1,2,3,4\}$ fix the vector $\omega^{-1} |2\rangle + \frac{\sqrt{5}+1}{2}\omega^2 |3\rangle + |4\rangle.$ If one can show $\{X[i,j,k], k= 1,2,3,4\}$ generate a dense subgroup of the unitary group of the $4$-dimensional complement for some $i\neq j$, then it is straightforward to prove the gate set generate a dense subgroup of $\SU(5)$ by Lemma \ref{stablizer universal}. We did not succeed in showing this either.

\subsection{Other models with $\SU(2)_4$}

There are at least $4$ obvious anyonic quantum computing models with $\SU(2)_4$ anyons.  Besides the universal model that we studied, three others are the qubit model $V^{1111}_0$, the qutrit model $V^{2222}_0$, and the qubit model $V^{1221}_0$.  The computational power of the corresponding models $V^{2222}_0$ and $V^{1221}_0$ in the Jones-Kauffman version of $SU(2)_4$ is analyzed in \cite{Gates}.  We conjecture that the model $V^{1111}_0$, shown in Figure \ref{1-qubit pic}, with measurements of total charges of metaplectic anyons is Ising like, i.e., braidings and such measurements can be simulated classically efficiently.

\begin{figure}
\includegraphics{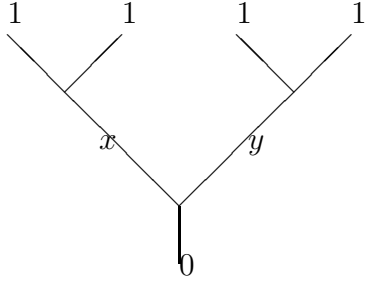}
\caption{1-qubit model}\label{1-qubit pic}
\end{figure}
%
%

The Hilbert space $V^{1111}_0$ is $2$-dimensional with basis $\{|00\rangle, |22\rangle\}.$ Under this basis, the $\sigma_i \,'s$ have the following matrices:

$\sigma_1 = \sigma_3 = \gamma
\begin{pmatrix}
\omega &  0  \\
0 &  1  \\
\end{pmatrix}
\qquad
\sigma_2 = \gamma \bar{\omega}
\begin{pmatrix}
-\frac{1}{2} + \frac{\sqrt{3}i}{6}   & \frac{\sqrt{6}i}{3}\\
\frac{\sqrt{6}i}{3}                 & -\frac{1}{2} - \frac{\sqrt{3}i}{6} \\
\end{pmatrix},
$ for some phase $\gamma$.

Up to normalization, this representation is the same as one component of $V^{DDDD}_B$ in $D(S_3)$, where the two components of $V^{DDDD}_B$ are isomorphic. See Appendix B.2.2 in \cite{cui2014universal}.

These matrices generate a group of size $24$ which is isomorphic to $\textrm{SL}(2, \mathbb{F}_3)$. Modulo the center, we get the even permutations $\A_4$.

Similarly, for the $2$-qubit encoding as that in Figure \ref{2-qubit pic}.

\begin{figure}
\includegraphics{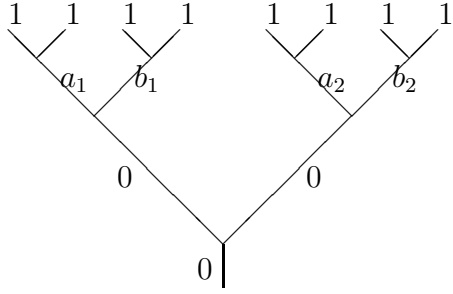}
\caption{2-qubit model}\label{2-qubit pic}
\end{figure}

%
%
%
%
%
%

We use $|0;a_1b_1\rangle \otimes |0;a_2b_2\rangle$ to denote the state in Figure \ref{2-qubit pic}.

Then the same braiding as before gives the following transformation: \\
$|0;00\rangle \otimes |0;00\rangle \mapsto |0;00\rangle \otimes |0;00\rangle,$ \\
$|0;00\rangle \otimes |0;22\rangle \mapsto |0;00\rangle \otimes |0;22\rangle,$  \\
$|0;22\rangle \otimes |0;00\rangle \mapsto |0;22\rangle \otimes |0;00\rangle,$\\
$|0;22\rangle \otimes |0;22\rangle \mapsto -\frac{1}{2}|0;22\rangle \otimes |0;22\rangle + \frac{\sqrt{3}i}{2}|4;22\rangle \otimes |4;22\rangle$\\

Thus, projecting out the charge $4$, we obtain the Controlled-$\sigma_z$ gate $\bigwedge(\sigma_z).$ But when the state is actually projected onto the charge $4$ part, the state is destroyed and the whole computational process has to start over again.

\appendix

\section{$F$-Matrices and $R$-symbols for $\SU(2)_4$} \label{6jSO(3)}

We order the labels $\{0,1,2,3,4\}$ in increasing order when we arrange the entries of the $F$-matrices. The conventions that we used for $F$-matrices and $R$-symbols are in \cite{cui2014universal}. We drop all trivial $F^{abc}_d=1$ when $a,b,c,d$ is admissible, in particular when one of $a ,\; b$ or $c$ is trivial.

\begin{itemize}

\item $\left(
\begin{array}{c}
 -1 \\
\end{array}
\right)\: \: \textrm{for} $

$F^{114}_{4}, F^{123}_{4}, F^{124}_{3}, F^{132}_{4}, F^{133}_{3}, F^{134}_{2}, F^{141}_{4}, F^{142}_{3}, F^{143}_{2}, F^{144}_{1}, F^{213}_{4}, F^{214}_{3}  $

$F^{222}_{4}, F^{224}_{2}, F^{231}_{4}, F^{234}_{1}, F^{241}_{3}, F^{242}_{2}, F^{243}_{1}, F^{312}_{4}, F^{313}_{3}, F^{314}_{2}, F^{321}_{4}, F^{324}_{1}  $

$F^{331}_{3}, F^{333}_{1}, F^{334}_{4}, F^{341}_{2}, F^{342}_{1}, F^{343}_{4}, F^{344}_{3}, F^{411}_{4}, F^{412}_{3}, F^{413}_{2}, F^{414}_{1}, F^{421}_{3}  $

$F^{422}_{2}, F^{423}_{1}, F^{431}_{2}, F^{432}_{1}, F^{433}_{4}, F^{434}_{3}, F^{441}_{1}, F^{443}_{3}  $

\item $\left(
\begin{array}{cc}
 -\frac{1}{\sqrt{3}} & \sqrt{\frac{2}{3}} \\
 \sqrt{\frac{2}{3}} & \frac{1}{\sqrt{3}} \\
\end{array}
\right)\: \: \textrm{for} \: \: $ $F^{111}_{1}, F^{131}_{3}, F^{313}_{1}, F^{333}_{3}  $

\item $\left(
\begin{array}{cc}
 -\frac{1}{\sqrt{2}} & \frac{1}{\sqrt{2}} \\
 \frac{1}{\sqrt{2}} & \frac{1}{\sqrt{2}} \\
\end{array}
\right)\: \: \textrm{for} $

$F^{112}_{2}, F^{122}_{1}, F^{122}_{3}, F^{132}_{2}, F^{211}_{2}, F^{213}_{2}, F^{221}_{1}, F^{221}_{3}, F^{223}_{1}, F^{231}_{2}, F^{312}_{2}, F^{322}_{1}  $

\item $\left(
\begin{array}{cc}
 -\sqrt{\frac{2}{3}} & \frac{1}{\sqrt{3}} \\
 \frac{1}{\sqrt{3}} & \sqrt{\frac{2}{3}} \\
\end{array}
\right)\: \: \textrm{for} \: \:$ $F^{113}_{3}, F^{133}_{1}, F^{311}_{3}, F^{331}_{1}  $

\item $\left(
\begin{array}{cc}
 -\frac{1}{2} & \frac{\sqrt{3}}{2} \\
 \frac{\sqrt{3}}{2} & \frac{1}{2} \\
\end{array}
\right)\: \: \textrm{for} \: \:$ $F^{121}_{2}, F^{212}_{1}  $

\item $\left(
\begin{array}{cc}
 -\frac{\sqrt{3}}{2} & \frac{1}{2} \\
 \frac{1}{2} & \frac{\sqrt{3}}{2} \\
\end{array}
\right)\: \: \textrm{for} \: \: $ $F^{123}_{2}, F^{212}_{3}, F^{232}_{1}, F^{321}_{2}  $

\item $\left(
\begin{array}{cc}
 \frac{1}{\sqrt{2}} & -\frac{1}{\sqrt{2}} \\
 -\frac{1}{\sqrt{2}} & -\frac{1}{\sqrt{2}} \\
\end{array}
\right)\: \: \textrm{for} \: \: $ $F^{223}_{3}, F^{233}_{2}, F^{322}_{3}, F^{332}_{2}  $

\item $\left(
\begin{array}{cc}
 \frac{1}{2} & -\frac{\sqrt{3}}{2} \\
 -\frac{\sqrt{3}}{2} & -\frac{1}{2} \\
\end{array}
\right)\: \: \textrm{for} \: \:$ $F^{232}_{3}, F^{323}_{2}  $

\item $\left(
\begin{array}{ccc}
 \frac{1}{2} & -\frac{1}{\sqrt{2}} & \frac{1}{2} \\
 -\frac{1}{\sqrt{2}} & 0 & \frac{1}{\sqrt{2}} \\
 \frac{1}{2} & \frac{1}{\sqrt{2}} & \frac{1}{2} \\
\end{array}
\right)\: \: \textrm{for}\: \: $ $F^{222}_{2}  $
\end{itemize}

\subsection{$R$-symbols}

\begin{itemize}

\item $1\: \: \textrm{for} \: \: R^{00}_{0}, R^{01}_{1}, R^{02}_{2}, R^{03}_{3}, R^{04}_{4}, R^{10}_{1}, R^{20}_{2}, R^{30}_{3}, R^{40}_{4}, R^{44}_{0}$

\item $e^{\frac{3 i \pi }{4}}\: \: \textrm{for} \: \: R^{11}_{0}$

\item $e^{\frac{i \pi }{12}}\: \: \textrm{for} \: \: R^{11}_{2}$

\item $e^{\frac{2 i \pi }{3}}\: \: \textrm{for} \: \: R^{12}_{1}, R^{21}_{1}, R^{22}_{2}, R^{23}_{3}, R^{32}_{3}$

\item $e^{\frac{i \pi }{6}}\: \: \textrm{for} \: \: R^{12}_{3}, R^{21}_{3}$

\item $e^{\frac{7 i \pi }{12}}\: \: \textrm{for} \: \: R^{13}_{2}, R^{31}_{2}$

\item $e^{\frac{i \pi }{4}}\: \: \textrm{for} \: \: R^{13}_{4}, R^{31}_{4}$

\item $i\: \: \textrm{for} \: \: R^{14}_{3}, R^{41}_{3}$

\item $e^{-\frac{2 i \pi }{3}}\: \: \textrm{for} \: \: R^{22}_{0}$

\item $e^{\frac{i \pi }{3}}\: \: \textrm{for} \: \: R^{22}_{4}$

\item $e^{-\frac{5 i \pi }{6}}\: \: \textrm{for} \: \: R^{23}_{1}, R^{32}_{1}$

\item $-1\: \: \textrm{for} \: \: R^{24}_{2}, R^{42}_{2}$

\item $e^{-\frac{i \pi }{4}}\: \: \textrm{for} \: \: R^{33}_{0}$

\item $e^{-\frac{11 i \pi }{12}}\: \: \textrm{for} \: \: R^{33}_{2}$

\item $-i\: \: \textrm{for} \: \: R^{34}_{1}, R^{43}_{1}$
\end{itemize}

\section{$F$-Matrices and $R$-symbols for $\SO(5)_2$}\label{6jSO(5)}

Here we list all the $6j$ symbols and some of the $R$-symbols that we need in this paper for the theory $\SO(5)_2.$ Again we omit the trivial  $F^{abc}_d$.  We arrange the label set in the order $\{1,z, y_1,y_2, \epsilon, \epsilon'\}$ in the following.

Let $h = \sqrt{10-2 \sqrt{5}}, k = \sqrt{10+2 \sqrt{5}}.$

\begin{itemize}

\item $\left(
\begin{array}{c}
 -1 \\
\end{array}
\right)\: \: \textrm{for} $

$F^{z y_1 y_1 }_{y_2 }, F^{z y_1 y_2 }_{y_1 }, F^{z y_2 y_1 }_{y_1 }, F^{z y_2 y_1 }_{y_2 }, F^{z \epsilon z }_{\epsilon }, F^{z \epsilon y_1 }_{\epsilon' }, F^{z \epsilon y_2 }_{\epsilon' }, F^{z \epsilon' z }_{\epsilon' }, F^{z \epsilon' y_1 }_{\epsilon }  $

$F^{z \epsilon' y_2 }_{\epsilon }, F^{y_1 z y_1 }_{y_2 }, F^{y_1 z y_2 }_{y_1 }, F^{y_1 y_1 z }_{y_2 }, F^{y_1 y_1 y_2 }_{z }, F^{y_1 y_2 z }_{y_1 }, F^{y_1 y_2 z }_{y_2 }, F^{y_1 y_2 y_1 }_{z }, F^{y_1 \epsilon z }_{\epsilon' }  $

$F^{y_1 \epsilon' z }_{\epsilon }, F^{y_2 z y_1 }_{y_1 }, F^{y_2 z y_2 }_{y_1 }, F^{y_2 y_1 z }_{y_1 }, F^{y_2 y_1 y_1 }_{z }, F^{y_2 y_1 y_2 }_{z }, F^{y_2 \epsilon z }_{\epsilon' }, F^{y_2 \epsilon' z }_{\epsilon }, F^{\epsilon z \epsilon }_{z }  $

$F^{\epsilon z \epsilon' }_{y_1 }, F^{\epsilon z \epsilon' }_{y_2 }, F^{\epsilon y_1 \epsilon' }_{z }, F^{\epsilon y_2 \epsilon' }_{z }, F^{\epsilon' z \epsilon }_{y_1 }, F^{\epsilon' z \epsilon }_{y_2 }, F^{\epsilon' z \epsilon' }_{z }, F^{\epsilon' y_1 \epsilon }_{z }, F^{\epsilon' y_2 \epsilon }_{z }  $

\item $\left(
\begin{array}{cc}
 \frac{1}{\sqrt{2}} & -\frac{1}{\sqrt{2}} \\
 \frac{1}{\sqrt{2}} & \frac{1}{\sqrt{2}} \\
\end{array}
\right)\: \: \textrm{for} $

$F^{y_1 y_1 y_2 }_{y_2 }, F^{y_1 y_1 \epsilon }_{\epsilon' }, F^{y_1 y_1 \epsilon' }_{\epsilon }, F^{y_1 \epsilon \epsilon }_{y_2 }, F^{y_2 y_2 y_1 }_{y_1 }, F^{y_2 \epsilon \epsilon }_{y_1 }, F^{\epsilon y_1 y_2 }_{\epsilon }, F^{\epsilon y_2 y_1 }_{\epsilon }, F^{\epsilon \epsilon' y_1 }_{y_1 }  $

$F^{\epsilon' \epsilon y_1 }_{y_1 }  $

\item $\left(
\begin{array}{cc}
 \frac{1}{\sqrt{2}} & \frac{1}{\sqrt{2}} \\
 \frac{1}{\sqrt{2}} & -\frac{1}{\sqrt{2}} \\
\end{array}
\right)\: \: \textrm{for} $

$F^{y_1 y_1 \epsilon }_{\epsilon }, F^{y_1 y_1 \epsilon' }_{\epsilon' }, F^{y_1 \epsilon \epsilon }_{y_1 }, F^{y_1 \epsilon' \epsilon' }_{y_1 }, F^{y_2 y_2 \epsilon }_{\epsilon }, F^{y_2 y_2 \epsilon }_{\epsilon' }, F^{y_2 y_2 \epsilon' }_{\epsilon }, F^{y_2 y_2 \epsilon' }_{\epsilon' }, F^{y_2 \epsilon \epsilon }_{y_2 }  $

$F^{y_2 \epsilon \epsilon' }_{y_2 }, F^{y_2 \epsilon' \epsilon }_{y_2 }, F^{y_2 \epsilon' \epsilon' }_{y_2 }, F^{\epsilon y_1 y_1 }_{\epsilon }, F^{\epsilon y_2 y_2 }_{\epsilon }, F^{\epsilon y_2 y_2 }_{\epsilon' }, F^{\epsilon \epsilon y_1 }_{y_1 }, F^{\epsilon \epsilon y_2 }_{y_2 }, F^{\epsilon \epsilon' y_2 }_{y_2 }  $

$F^{\epsilon' y_1 y_1 }_{\epsilon' }, F^{\epsilon' y_2 y_2 }_{\epsilon }, F^{\epsilon' y_2 y_2 }_{\epsilon' }, F^{\epsilon' \epsilon y_2 }_{y_2 }, F^{\epsilon' \epsilon' y_1 }_{y_1 }, F^{\epsilon' \epsilon' y_2 }_{y_2 }  $

\item $\left(
\begin{array}{cc}
 0 & 1 \\
 1 & 0 \\
\end{array}
\right)\: \: \textrm{for} $   $F^{y_1 y_2 y_1 }_{y_2 }, F^{y_2 y_1 y_2 }_{y_1 }  $

\item $\left(
\begin{array}{cc}
 \frac{1}{\sqrt{2}} & \frac{1}{\sqrt{2}} \\
 -\frac{1}{\sqrt{2}} & \frac{1}{\sqrt{2}} \\
\end{array}
\right)\: \: \textrm{for} $

$F^{y_1 y_2 y_2 }_{y_1 }, F^{y_1 y_2 \epsilon }_{\epsilon }, F^{y_1 \epsilon \epsilon' }_{y_1 }, F^{y_1 \epsilon' \epsilon }_{y_1 }, F^{y_2 y_1 y_1 }_{y_2 }, F^{y_2 y_1 \epsilon }_{\epsilon }, F^{\epsilon y_1 y_1 }_{\epsilon' }, F^{\epsilon \epsilon y_1 }_{y_2 }, F^{\epsilon \epsilon y_2 }_{y_1 }  $

$F^{\epsilon' y_1 y_1 }_{\epsilon }  $

\item $\left(
\begin{array}{cc}
 -\frac{1}{\sqrt{2}} & \frac{1}{\sqrt{2}} \\
 \frac{1}{\sqrt{2}} & \frac{1}{\sqrt{2}} \\
\end{array}
\right)\: \: \textrm{for} $

$F^{y_1 y_2 \epsilon }_{\epsilon' }, F^{y_1 \epsilon' \epsilon }_{y_2 }, F^{y_2 y_1 \epsilon' }_{\epsilon }, F^{y_2 \epsilon \epsilon' }_{y_1 }, F^{\epsilon y_2 y_1 }_{\epsilon' }, F^{\epsilon \epsilon' y_1 }_{y_2 }, F^{\epsilon' y_1 y_2 }_{\epsilon }, F^{\epsilon' \epsilon y_2 }_{y_1 }  $

\item $\left(
\begin{array}{cc}
 -\frac{1}{\sqrt{2}} & -\frac{1}{\sqrt{2}} \\
 \frac{1}{\sqrt{2}} & -\frac{1}{\sqrt{2}} \\
\end{array}
\right)\: \: \textrm{for} $    $F^{y_1 y_2 \epsilon' }_{\epsilon }, F^{y_2 y_1 \epsilon }_{\epsilon' }, F^{\epsilon \epsilon' y_2 }_{y_1 }, F^{\epsilon' \epsilon y_1 }_{y_2 }  $

\item $\left(
\begin{array}{cc}
 \frac{1}{\sqrt{2}} & -\frac{1}{\sqrt{2}} \\
 -\frac{1}{\sqrt{2}} & -\frac{1}{\sqrt{2}} \\
\end{array}
\right)\: \: \textrm{for} $

$F^{y_1 y_2 \epsilon' }_{\epsilon' }, F^{y_1 \epsilon' \epsilon' }_{y_2 }, F^{y_2 y_1 \epsilon' }_{\epsilon' }, F^{y_2 \epsilon' \epsilon' }_{y_1 }, F^{\epsilon' y_1 y_2 }_{\epsilon' }, F^{\epsilon' y_2 y_1 }_{\epsilon' }, F^{\epsilon' \epsilon' y_1 }_{y_2 }, F^{\epsilon' \epsilon' y_2 }_{y_1 }  $

\item $\frac{1}{4}\left(
\begin{array}{cc}
 -\frac{\sqrt{5}}{10}k^2 & h \\
 h  & \frac{\sqrt{5}}{10}k^2 \\
\end{array}
\right)\: \: \textrm{for} $   $F^{y_1 \epsilon y_1 }_{\epsilon }, F^{\epsilon y_1 \epsilon }_{y_1 }  $

\item $\frac{1}{4}\left(
\begin{array}{cc}
 h & \frac{\sqrt{5}}{10}k^2 \\
 \frac{\sqrt{5}}{10}k^2 & -h \\
\end{array}
\right)\: \: \textrm{for} $   $F^{y_1 \epsilon y_1 }_{\epsilon' }, F^{y_1 \epsilon' y_1 }_{\epsilon }, F^{\epsilon y_1 \epsilon' }_{y_1 }, F^{\epsilon' y_1 \epsilon }_{y_1 }  $

\item $\frac{1}{4}\left(
\begin{array}{cc}
 \frac{\sqrt{5}}{10}h^2 & k \\
  k & -\frac{\sqrt{5}}{10}h^2 \\
\end{array}
\right)\: \: \textrm{for} $   $F^{y_1 \epsilon y_2 }_{\epsilon }, F^{y_2 \epsilon y_1 }_{\epsilon }, F^{\epsilon y_1 \epsilon }_{y_2 }, F^{\epsilon y_2 \epsilon }_{y_1 }  $

\item $\frac{1}{4}\left(
\begin{array}{cc}
 k & -\frac{\sqrt{5}}{10}h^2 \\
 -\frac{\sqrt{5}}{10}h^2 & -\frac{\sqrt{5}}{20}hk^2 \\
\end{array}
\right)\: \: \textrm{for} $

$F^{y_1 \epsilon y_2 }_{\epsilon' }, F^{y_1 \epsilon' y_2 }_{\epsilon }, F^{y_2 \epsilon y_1 }_{\epsilon' }, F^{y_2 \epsilon' y_1 }_{\epsilon }, F^{\epsilon y_1 \epsilon' }_{y_2 }, F^{\epsilon y_2 \epsilon' }_{y_1 }, F^{\epsilon' y_1 \epsilon }_{y_2 }, F^{\epsilon' y_2 \epsilon }_{y_1 }  $

\item $\left(
\begin{array}{cc}
 -\frac{1}{\sqrt{2}} & \frac{1}{\sqrt{2}} \\
 -\frac{1}{\sqrt{2}} & -\frac{1}{\sqrt{2}} \\
\end{array}
\right)\: \: \textrm{for} $  $F^{y_1 \epsilon \epsilon' }_{y_2 }, F^{y_2 \epsilon' \epsilon }_{y_1 }, F^{\epsilon y_1 y_2 }_{\epsilon' }, F^{\epsilon' y_2 y_1 }_{\epsilon }  $

\item $\frac{1}{4}\left(
\begin{array}{cc}
 \frac{\sqrt{5}}{10}k^2 & -h \\
 -h & -\frac{\sqrt{5}}{10}k^2 \\
\end{array}
\right)\: \: \textrm{for} $  $F^{y_1 \epsilon' y_1 }_{\epsilon' }, F^{\epsilon' y_1 \epsilon' }_{y_1 }  $

\item $-\frac{\sqrt{5}h}{40}\left(
\begin{array}{cc}
 h & \frac{k^2}{2} \\
 \frac{k^2}{2} & -h \\
\end{array}
\right)\: \: \textrm{for} $   $F^{y_1 \epsilon' y_2 }_{\epsilon' }, F^{y_2 \epsilon' y_1 }_{\epsilon' }, F^{\epsilon' y_1 \epsilon' }_{y_2 }, F^{\epsilon' y_2 \epsilon' }_{y_1 }  $

\item $\frac{1}{4}\left(
\begin{array}{cc}
 -\frac{\sqrt{5}}{10}k^2 & -h \\
 -h & \frac{\sqrt{5}}{10}k^2 \\
\end{array}
\right)\: \: \textrm{for} $  $F^{y_2 \epsilon y_2 }_{\epsilon }, F^{\epsilon y_2 \epsilon }_{y_2 }  $

\item $\frac{1}{4}\left(
\begin{array}{cc}
 -h & \frac{\sqrt{5}}{10}k^2 \\
 \frac{\sqrt{5}}{10}k^2 & h \\
\end{array}
\right)\: \: \textrm{for} $   $F^{y_2 \epsilon y_2 }_{\epsilon' }, F^{y_2 \epsilon' y_2 }_{\epsilon }, F^{\epsilon y_2 \epsilon' }_{y_2 }, F^{\epsilon' y_2 \epsilon }_{y_2 }  $

\item $\frac{1}{4}\left(
\begin{array}{cc}
 \frac{\sqrt{5}}{10}k^2 & h \\
 h & -\frac{\sqrt{5}}{10}k^2 \\
\end{array}
\right)\: \: \textrm{for} $   $F^{y_2 \epsilon' y_2 }_{\epsilon' }, F^{\epsilon' y_2 \epsilon' }_{y_2 }  $

\item $\frac{\sqrt{5}}{10}\left(
\begin{array}{cc}
 h & k \\
 k & -h \\
\end{array}
\right)\: \: \textrm{for} $   $F^{\epsilon \epsilon \epsilon }_{\epsilon' }, F^{\epsilon \epsilon \epsilon' }_{\epsilon }, F^{\epsilon \epsilon' \epsilon }_{\epsilon }, F^{\epsilon' \epsilon \epsilon }_{\epsilon }  $

\item $-\frac{\sqrt{5}h}{10}\left(
\begin{array}{cc}
 1 & \frac{\sqrt{5}}{20}k^2 \\
 \frac{\sqrt{5}}{20}k^2 & -1 \\
\end{array}
\right)\: \: \textrm{for} $  $F^{\epsilon \epsilon' \epsilon' }_{\epsilon' }, F^{\epsilon' \epsilon \epsilon' }_{\epsilon' }, F^{\epsilon' \epsilon' \epsilon }_{\epsilon' }, F^{\epsilon' \epsilon' \epsilon' }_{\epsilon }  $

\item $\left(
\begin{array}{ccc}
 \frac{1}{2} & \frac{1}{2} & \frac{1}{\sqrt{2}} \\
 \frac{1}{2} & \frac{1}{2} & -\frac{1}{\sqrt{2}} \\
 \frac{1}{\sqrt{2}} & -\frac{1}{\sqrt{2}} & 0 \\
\end{array}
\right)\: \: \textrm{for} $  $F^{y_1 y_1 y_1 }_{y_1 }, F^{y_2 y_2 y_2 }_{y_2 }  $

\item $\frac{1}{\sqrt{5}}\left(
\begin{array}{ccc}
 1 & \sqrt{2} & \sqrt{2} \\
 \sqrt{2} & -\frac{\sqrt{5}+1}{2} & \frac{\sqrt{5}-1}{2} \\
 \sqrt{2} & \frac{\sqrt{5}-1}{2} & -\frac{\sqrt{5}+1}{2} \\
\end{array}
\right)\: \: \textrm{for} $  $F^{\epsilon \epsilon \epsilon }_{\epsilon }, F^{\epsilon' \epsilon' \epsilon' }_{\epsilon' }  $

\item $\frac{1}{\sqrt{5}}\left(
\begin{array}{ccc}
 1 & -\sqrt{2} & -\sqrt{2} \\
\sqrt{2} & \frac{\sqrt{5}+1}{2} & -\frac{\sqrt{5}-1}{2} \\
 \sqrt{2} & -\frac{\sqrt{5}-1}{2} & \frac{\sqrt{5}+1}{2} \\
\end{array}
\right)\: \: \textrm{for} $  $F^{\epsilon \epsilon \epsilon' }_{\epsilon' }, F^{\epsilon' \epsilon' \epsilon }_{\epsilon }  $

\item $\frac{1}{\sqrt{5}}\left(
\begin{array}{ccc}
 -1 & \sqrt{2} & \sqrt{2} \\
\sqrt{2} & \frac{\sqrt{5}+1}{2} & -\frac{\sqrt{5}-1}{2} \\
 \sqrt{2} & -\frac{\sqrt{5}-1}{2} & \frac{\sqrt{5}+1}{2} \\
\end{array}
\right)\: \: \textrm{for} $  $F^{\epsilon \epsilon' \epsilon }_{\epsilon' }, F^{\epsilon' \epsilon \epsilon' }_{\epsilon }  $

\item $\frac{1}{\sqrt{5}}\left(
\begin{array}{ccc}
 1 & \sqrt{2} & \sqrt{2} \\
-\sqrt{2} & \frac{\sqrt{5}+1}{2} & -\frac{\sqrt{5}-1}{2} \\
 -\sqrt{2} & -\frac{\sqrt{5}-1}{2} & \frac{\sqrt{5}+1}{2} \\
\end{array}
\right)\: \: \textrm{for} $  $F^{\epsilon \epsilon' \epsilon' }_{\epsilon }, F^{\epsilon' \epsilon \epsilon }_{\epsilon' }  $
\end{itemize}

\subsection{$R$-symbols}

We only list the $R$-symbols that are enough for our computational purpose. These data are from \cite{hastings2013metaplectic}.

$R^{y_1y_1}_1 = e^{\frac{6 \pi i}{5}}, \, \, R^{y_1y_1}_z = e^{\frac{\pi i}{5}}, \,\, R^{y_1y_1}_{y_2} = e^{\frac{4 \pi i}{5}}$

$R^{\epsilon \epsilon}_{1} = -i, \,\, R^{\epsilon \epsilon}_{y_1} = e^{\frac{11 \pi i}{10}}, \, \,R^{\epsilon \epsilon}_{y_2} = e^{\frac{- \pi i}{10} }$

\section{Matrices of the generators of $\B_4$ in a $1$-qupit model} \label{braid matrix}
In this appendix, we show how to compute the matrices of the three generators $\sigma_1, \sigma_2,\sigma_3$ of $\B_4$ in a $1$-qupit model $V^{aaaa}_b$ with computational basis $\{|x_i y_i \rangle : 1 \leq i \leq p, p \geq 2 \}$, where $a,b$ are two anyon types in a unitary modular category $\Cat$. The data we will use are some of the $F$-matrices and $R$-matrices, which are defined in Figure \ref{FMatrix} and \ref{RMatrix}, where $F_{d;nm}^{abc}$ is the $(n,m)$-entry of the matrix $F_{d}^{abc}$, and $R^{ba}_{c}$ is a $1 \times 1$ $R$-matrix. For more detailed explanations, see Section $2$ of \cite{cui2014universal}, or \cite{Wang}.

\begin{figure}
\includegraphics{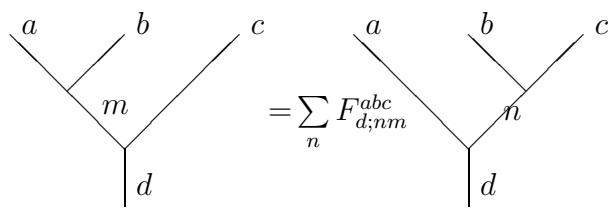}
\caption{Definition of the $F$-matrix $F^{abc}_{d;nm}$ }\label{FMatrix}
\end{figure}

%
%
%
%

%
%
%
%

\begin{figure}
\includegraphics{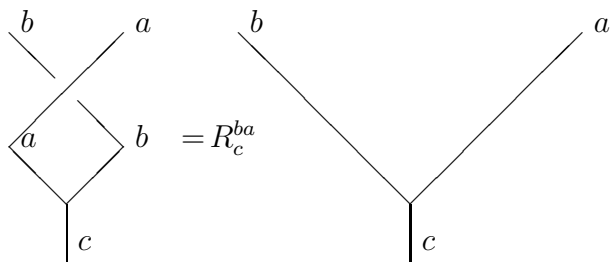}
\caption{Definition of the $R$-matrix $R^{ba}_c$}\label{RMatrix}
\end{figure}

See Figure \ref{1-qupit general} for the $1$-qupit model $V^{aaaa}_b$.


\begin{figure}
\includegraphics{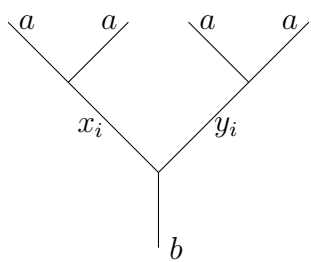}
\caption{General 1-qupit model}\label{1-qupit general}
\end{figure}

Then the braiding pictures of $\sigma_1, \sigma_2, \sigma_3$ are given as shown in Figure \ref{braiding pic}. If we rewrite the braiding pictures in the computational basis, we get the matrices of the corresponding generators. Explicit illustrations are shown below.
%
%

\begin{figure}
\includegraphics{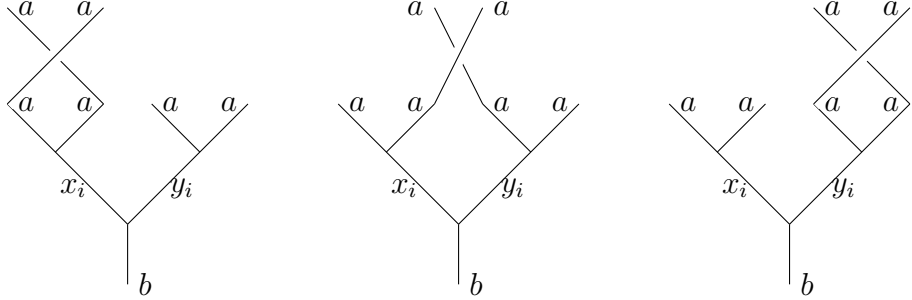}
\caption{Braiding pictures for $\sigma_1, \sigma_2, \sigma_3$} \label{braiding pic}
\end{figure}

By the definition of $R$-matrices in Figure \ref{RMatrix}, the braiding picture for $\sigma_1$ can be written as that shown in Figure \ref{braiding sigma1}. Thus $\sigma_1$ is always the diagonal matrix with $(i,i)$-entry $R^{aa}_{x_i}$. Similarly, $\sigma_3$ is also a diagonal matrix with $(i,i)$-entry $R^{aa}_{y_i}$. The calculation of $\sigma_2$ is much more complicated as it involves change of bases using $F$-matrices. See figure \ref{braiding sigma2} for the illustrations, where ${F^{-1}}^{aaa}_{c;dy_i}$ is the $(d,y_i)$-entry of the inverse of the matrix $F^{aaa}_{c}$, and in the last picture of the equations, the pair $|fe\rangle$ could only be one of the basis elements in $\{|x_jy_j\rangle: 1 \leq j \leq p\}$. Let $|fe\rangle = |x_jy_j\rangle$, then the $(j,i)$-entry of $\sigma_2$ is given by the following:
\begin{equation}
(\sigma_2)_{j,i} = \sum\limits_{c,d} F^{aay_i}_{b;cx_i}{F^{-1}}^{aaa}_{c;dy_i}R^{aa}_dF^{aaa}_{c;y_jd}{F^{-1}}^{aay_j}_{b;x_jc}
\end{equation}

%
%

\begin{figure}
\includegraphics{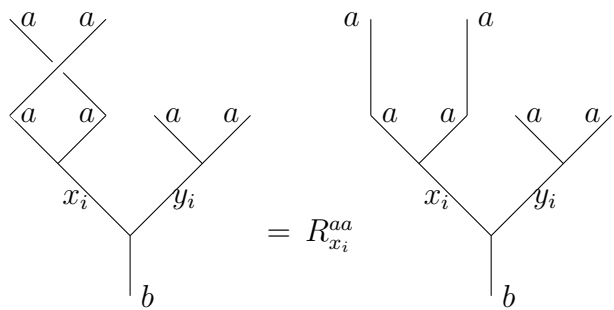}
\caption{Calculating $\sigma_1$} \label{braiding sigma1}
\end{figure}

\begin{figure}
\includegraphics{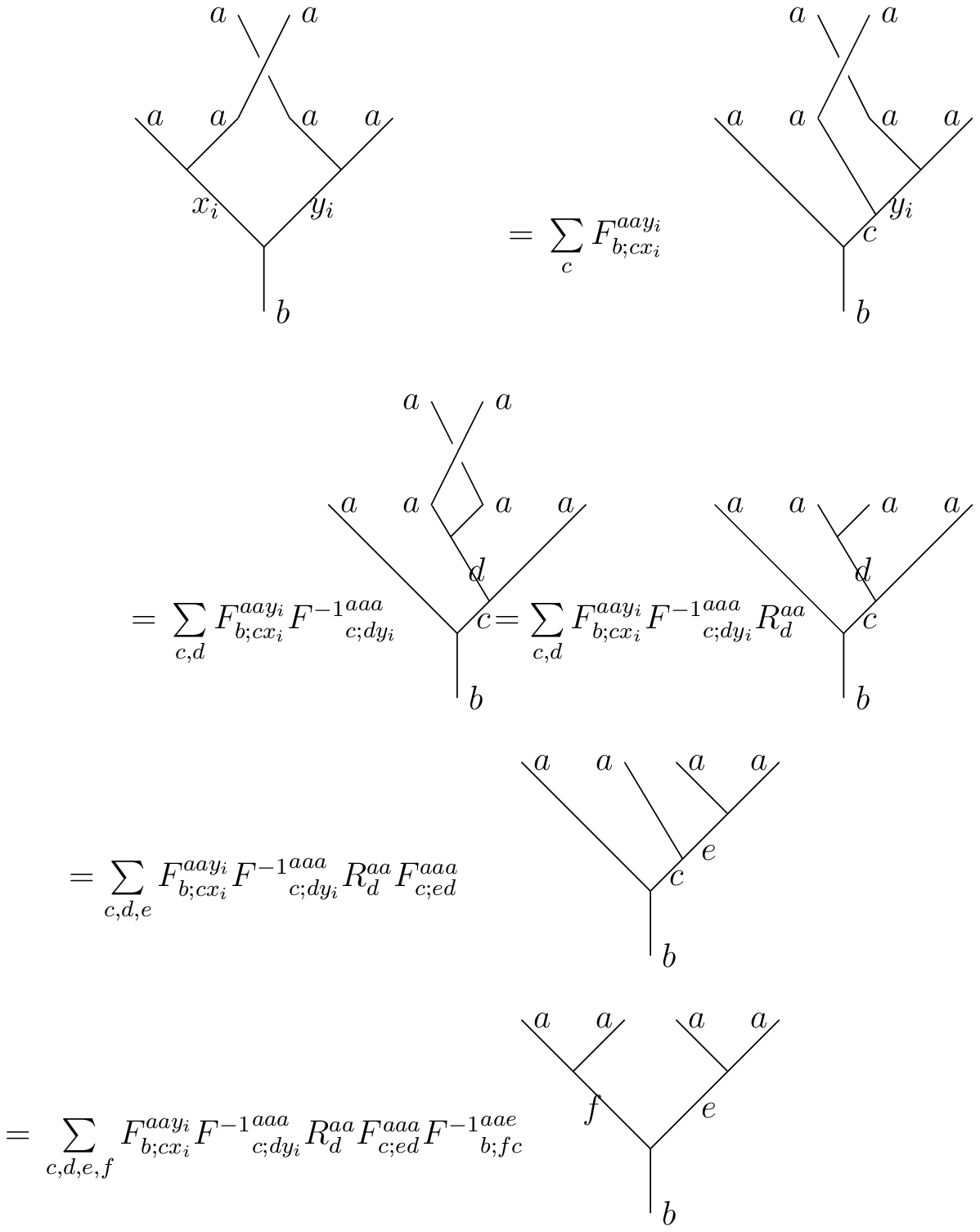}
\caption{Calculating $\sigma_2$} \label{braiding sigma2}
\end{figure}

\bibliographystyle{plain}
\bibliography{UQCMAbib}
\end{document}